\documentclass[a4paper,UKenglish]{lipics-v2016}

 
\usepackage{microtype}
\usepackage{bm}
\usepackage{cite}
\newcommand{\quotes}[1]{``#1''}
\newtheorem{d1}{Definition}
\newtheorem{r1}{Remark}
\usepackage{algorithm}
\usepackage[noend]{algorithmic}
\newlength\myindent 
\setlength\myindent{2em} 
\newcommand{\h}{\hspace*{0.2in}}


\bibliographystyle{plainurl}

\title{Lattice Agreement in Message Passing Systems\footnote{Supported by NSF CNS-1563544, NSF CNS-1346245, Huawei Inc., and the Cullen Trust for Higher Education Endowed Professorship.}\footnote{A full version of the paper is available at }}
\titlerunning{Lattice Agreement in Message Passing Systems} 

\author[1]{Xiong Zheng}
\author[2]{Changyong Hu}
\author[3]{Vijay K. Garg}
\affil[1]{University of Texas at Austin, Austin, TX 78712, USA.\\
  \texttt{zhengxiongtym@utexas.edu}}
\affil[2]{University of Texas at Austin, Austin, TX 78712, USA.\\
  \texttt{colinhu9@utexas.edu}}
\affil[3]{University of Texas at Austin, Austin, TX 78712, USA.\\
  \texttt{garg@ece.utexas.edu}}
\authorrunning{X.\,Zheng, C.\,Hu and V.\,K. Garg} 

\Copyright{Xiong Zheng, Changyong Hu and Vijay K. Garg}

\subjclass{Dummy classification -- please refer to \url{http://www.acm.org/about/class/ccs98-html}}
\keywords{Lattice Agreement, Replicated State Machine, Consensus}

\EventEditors{John Q. Open and Joan R. Acces}
\EventNoEds{2}
\EventLongTitle{42nd Conference on Very Important Topics (CVIT 2016)}
\EventShortTitle{CVIT 2016}
\EventAcronym{CVIT}
\EventYear{2016}
\EventDate{December 24--27, 2016}
\EventLocation{Little Whinging, United Kingdom}
\EventLogo{}
\SeriesVolume{42}
\ArticleNo{23}

\begin{document}

\maketitle

\begin{abstract}
This paper studies the lattice agreement problem and the generalized lattice agreement problem in distributed message passing systems. In the lattice agreement problem, given input values from a lattice, processes have to non-trivially decide output values that lie on a chain. We consider the lattice agreement problem in both synchronous and asynchronous systems. For synchronous lattice agreement, we present two algorithms which run in $\log f$ and $\min \{O(\log^2 h(L)), O(\log^2 f)\}$ rounds, respectively, where $h(L)$ denotes the height of the {\em input sublattice} $L$, $f < n$ is the number of crash failures the system can tolerate, and $n$ is the number of processes in the system. These algorithms have significant better round complexity than previously known algorithms.
The algorithm by Attiya et al. \cite{attiya1995atomic} takes $\log n$ synchronous rounds, and the algorithm by Mavronicolasa \cite{mavronicolasabound} takes $\min \{O(h(L)), O(\sqrt{f})\}$ rounds.
For asynchronous lattice agreement, we propose an algorithm which has time complexity of $2 \cdot \min \{h(L), f + 1\}$ message delays
which improves on the previously known time complexity of $O(n)$ message delays.

The generalized lattice agreement problem defined by Faleiro et al in \cite{faleiro2012generalized} is a generalization of the lattice agreement problem where it is applied for the replicated state machine. We propose an algorithm which guarantees liveness when a majority of the processes are correct in asynchronous systems. Our algorithm requires $\min \{O(h(L)), O(f)\}$ units of time in the worst case which is better than $O(n)$ units of time required by the algorithm of Faleiro et al. \cite{faleiro2012generalized}.
 \end{abstract}

\section{Introduction}
Lattice agreement, introduced in \cite{attiya1995atomic} to solve the atomic snapshot problem \cite{afek1993atomic} in shared memory, is an important decision problem in distributed systems. In this problem, processes start with input values from a lattice and need to decide values which are comparable to each other. Lattice agreement problem is a weaker decision problem than consensus. In synchronous systems, consensus cannot be solved in fewer than $f + 1$ rounds \cite{dolev1983authenticated}, but lattice agreement can be solved in $\log f$ rounds (shown by an algorithm we propose). In asynchronous systems, the consensus problem cannot be solved even with one failure \cite{fischer1985impossibility}, whereas the lattice agreement problem can be solved in asynchronous systems when a majority of processes is correct \cite{faleiro2012generalized}. 

In synchronous message passing systems, a $\log n$ rounds recursive algorithm based on ``branch-and-bound'' approach is proposed in \cite{attiya1995atomic} to solve the lattice agreement problem with message complexity of $O(n^2)$. It can tolerate at most $n - 1$ process failures. Later, \cite{mavronicolasabound} gave an algorithm with round complexity of $\min \{1 + h(L), \lfloor (3 + \sqrt{8f + 1}/2)\rfloor\}$, for any execution where at most $f < n$ processes may crash. Their algorithm has the early-stopping property and is the first algorithm with round complexity that depends on the actual height of the input lattice. Our first algorithm, for synchronous lattice agreement, $LA_{\alpha}$, requires $\log h(L)$ rounds. It assumes that the height of the input lattice is known to all processes. By applying this algorithm as a building block, we give an algorithm, $LA_{\beta}$, which requires only $\log f$ rounds without the height assumption in $LA_{\alpha}$. Instead of directly trying to decide on the comparable output values which are from a lattice with an unknown height, this algorithm first performs lattice agreement on the failure set known by each process by using $LA_{\alpha}$. Then each process removes values from \textit{faulty} processes they know and outputs the join of all the remaining values. Our third algorithm, $LA_{\gamma}$, has round complexity of $\min \{O(\log^2 h(L)), O(\log^2 f))$, which depends on the height of the input lattice but does not assume that the height is known. This algorithm iteratively guesses the actual height of the input lattice and applies $LA_{\alpha}$ with the guessed height as input, until all processes terminate. 

Lattice agreement in asynchronous message passing systems is useful due to its applications in
atomic snapshot objects and fault-tolerant replicated state machines. 
Efficient implementation of atomic snapshot objects in crash-prone asynchronous message passing systems is important because they can make design of algorithms in such systems easier (examples of algorithms in message passing systems based on snapshot objects  can be found in \cite{taubenfeld2006synchronization},\cite{raynal2012concurrent} and \cite{attiya2004distributed}). As shown in \cite{attiya1995atomic}, any algorithm for lattice agreement can be applied to solve the atomic snapshot problem in a shared memory system. We note that \cite{attiya1998atomic} does not directly use lattice agreement to solve the atomic snapshot problem, but their idea of producing comparable \textit{views} for processes is essentially lattice agreement. Thus, by using the same transformation techniques in \cite{attiya1995atomic} and \cite{attiya1998atomic}, algorithms for lattice agreement problem can be directly applied to implement atomic snapshot objects in crash-prone message passing systems. We give an algorithm for asynchronous lattice agreement problem which requires $\min \{O(h(L)), O(f)\}$ message delays. Then, by applying the technique in \cite{attiya1998atomic}, our algorithm can be used to implement atomic snapshot objects on top of crash-prone asynchronous message passing systems and achieve time complexity of $O(f)$ message delays in the worst case. 
Our result significantly improves the message delays in the previous work by Delporte-Gallet, Fauconnier et al\cite{delporte2016implementing}. The algorithm in \cite{delporte2016implementing} directly implements an atomic snapshot object on top of crash-prone message passing systems and requires $O(n)$ message delays in the worst case. 

Another related work for lattice agreement in asynchronous systems is by Faleiro et al. \cite{faleiro2012generalized}. They solve the lattice agreement problem in asynchronous systems by giving a Paxos style protocol \cite{lamport1998part,lamport2001paxos}, in which each proposer keeps proposing a value until it gets \textit{accept} from a majority of acceptors. The acceptor only \textit{accepts} a proposal when the proposal has a bigger value than its accepted value. Their algorithm requires $O(n)$ message delays. Our asynchronous lattice agreement algorithm does not have Paxos style. Instead, it runs in \textit{round-trips}. Each round-trip is composed of sending a message to all and getting $n - f$ acknowledgements back. Our algorithm guarantees termination in $\min \{O(h(L)), O(f)\}$ message delays which is a significant improvement over $O(n)$ message delays. 

Generalized lattice agreement problem defined in \cite{faleiro2012generalized} is a generalization of the lattice agreement problem in asynchronous systems. It is applied to implement a specific class of replicated state machines. In conventional replicated state machine approach \cite{schneider1990implementing}, consensus based mechanism is used to implement strong consistency.
Due to performance reasons, many systems relax the strong consistency requirement and support eventual consistency \cite{tanenbaum2007distributed}, i.e, all copies are eventually consistent. However, there is no guarantee on when this eventual consistency happens. Also, different copies could be in an inconsistent state before this eventual situation happens.  Conflict-free replicated data types (CRDT) \cite{shapiro2011conflict,shapiro2011convergent} is a data structure which supports such eventual consistency. In CRDT, all operations are designed to be commutative such that they can be concurrently executed without coordination. As shown in \cite{faleiro2012generalized}  by applying generalized lattice agreement on top of CRDT, the states of any two copies can be made comparable and thus provide linearizability guarantee \cite{herlihy1990linearizability} for CRDT. 

The following example from \cite{faleiro2012generalized}
motivates generalized lattice agreement. Consider a replicated set data structure which supports adds and reads. Suppose there are two concurrent updates, \textit{add(a)} and \textit{add(b)}, and two concurrent reads on copy one and two respectively. By using CRDT, it could happen that the two reads return \textit{\{a\}} and \textit{\{b\}} respectively. This execution is not linearizable \cite{herlihy1990linearizability}, because if {\em add(a)} appears before {\em add(b)} in the linear order, then no read can return {\em \{b\}}. 
On the other hand, if we use conventional consensus replicated state machine technique, then all operations would be coordinated including the two reads. 
This greatly impacts the throughput of the system. 
By applying generalized lattice agreement on top of CRDT, all operations can be concurrently executed and any two reads always return comparable views of the system. In the above example, the two reads return either (i) \textit{\{a\}} and \textit{\{a, b\}} or (ii) \textit{\{b\}} and \textit{\{a, b\}} which is linearizable. Therefore, generalized lattice agreement can be applied on top of CRDT to provide better consistency guarantee than CRDT and better availability than conventional replicated state machine technique.

Since the generalized lattice agreement problem has applications in building replicated state machines, it
is important to reduce the message delays for a value to be learned. Faleiro et al. \cite{faleiro2012generalized} propose an algorithm for the generalized lattice agreement by using their algorithm for the lattice agreement problem as a building block. Their generalized lattice agreement algorithm satisfies safety and liveness assuming $f < \lceil \frac{n}{2} \rceil$. A value is eventually learned in their algorithm after $O(n)$ message delays in the worst case.
Our algorithm guarantees that a value is learned in $\min \{O(h(L)), O(f)\}$ message delays.

In summary, this paper makes the following contributions:
\begin{itemize}
\item We present an algorithm, $LA_\alpha$ to solve the lattice agreement in synchronous system in $\log h(L)$ rounds assuming $h(L)$ is known. Using $LA_\alpha$, we propose an algorithm, $LA_\beta$ to solve the standard lattice agreement problem in $\log f$ rounds. 
This bound is significantly better than the previously known upper bounds of $\log n$ by \cite{attiya1998atomic} and min$\{1 + h(L), \lfloor (3 + \sqrt{8f + 1}/2)\rfloor\}$ by \cite{mavronicolasabound} (and solves the open problem posed there). We also give an algorithm, $LA_\gamma$ which runs in $\min \{O(\log^2 h(L)), O(\log^2 f)\}$ rounds. 
\item For the lattice agreement problem in asynchronous systems, we  give an algorithm, $LA_\delta$ which requires $2 \cdot \min \{h(L), f + 1\}$ message delays which improves the $O(n)$ bound by \cite{faleiro2012generalized}.
\item Based on the asynchronous lattice agreement algorithm, we present an algorithm, $GLA_\alpha$, to solve the generalized lattice agreement with time complexity $\min \{O(h(L)), O(f)\}$ message delays which improves the $O(n)$ bound by \cite{faleiro2012generalized}. 
\end{itemize}

Related previous work and our results are summarized in Table \ref{tab:related_work}. {\em LA sync} and {\em LA async} represent lattice agreement in synchronous systems and asynchronous systems, respectively. {\em GLA async} represents generalized lattice agreement in asynchronous systems.  $LA_{\alpha}$ is designed to solve the lattice agreement problem with the assumption that the height of the input lattice is given. It serves as a building block for $LA_{\beta}$ and $LA_{\gamma}$. For synchronous systems, the time complexity is given in terms of synchronous rounds. For asynchronous system, the time complexity is given in terms of message delays. The message column represents the total number of messages sent by all processes in one execution. For generalized lattice agreement problem, the message complexity is given in terms of the number of messages needed for a value to be learned. 

\begin{table*}[h]
\caption{Previous Work and Our Results \label{tab:related_work}}
\centering
\begin{tabular}{ |c|c|c|c|} 
 \hline
 \textbf{Problem} & \textbf{Protocol} &  \textbf{Time} & \textbf{Message} \\ \hline
 \multirow{5}{*}{LA sync} & [3] &  $O(\log n)$ & $O(n^2)$ \\   
 \cline{2-4} 
 & \cite{mavronicolasabound} & $\min \{O(h(L)), O(\sqrt{f})\}$ & $n^2 \cdot \min \{O(h(L)), O(\sqrt{f}) \}$\\ 
 \cline{2-4} 
 & $LA_{\alpha}$ & $O(\log h(L))$ & $O(n^2 \log h(L))$ \\
 \cline{2-4} 
 & $LA_{\beta}$ & $O(\log f)$ & $O(n^2 \log f)$ \\ 
 \cline{2-4} 
 & $LA_{\gamma}$ & $\min \{O(\log^2 h(L)), O(\log^2 f)\}$ & $n^2 \cdot \min \{O(\log^2 h(L)), O(\log^2 f)\}$\\ 
 \cline{1-4} 
\multirow{2}{*}{LA async} & \cite{faleiro2012generalized}  & $O(n)$& $O(n^3)$\\
 \cline{2-4} 
 & $LA_{\delta}$   & $\min \{O(h(L)), O(f)\}$& $ n^2 \cdot \min \{O(h(L)), O({f}) \}$\\
 \cline{1-4}
 \multirow{2}{*}{GLA async} & \cite{faleiro2012generalized} & $O(n)$ & $O(n^3)$ \\ 
 \cline{2-4}  
 & $GLA_{\alpha}$ &   $\min \{O(h(L)), O(f)\}$ & $n^2 \cdot \min \{O(h(L)), O(f)\}$ \\
 \hline
 
\end{tabular}
\vspace*{-0.2in}
\end{table*}

\section{System Model and Problem Definitions}
\subsection{System Model}
We assume a distributed message passing system with $n$ processes in a completely connected topology, denoted as $p_1,...,p_n$. We consider both synchronous or asynchronous systems. Synchronous means that message delays and the duration of the operations performed by the process have an upper bound on the time. Asynchronous means that there is no upper bound on the time for a message to reach its destination. The model assumes that processes may have crash failures but no Byzantine failures. 
The model parameter $f$ denotes the maximum number of processes that may crash in a run. We assume that the underlying communication system is reliable but the message channel may not be FIFO. We say a process is \textit{faulty} in a run if it crashes and \textit{correct} or \textit{non-faulty} otherwise. In our following algorithms, when a process sends a message to all, it also sends this message to itself. 

\subsection{Lattice Agreement}
Let ($X$, $\leq$, $\sqcup$) be a finite join semi-lattice with a partial order $\leq$ and join $\sqcup$. Two values $u$ and $v$ in $X$ are comparable iff $u \leq v$ or $v \leq u$. The join of $u$ and $v$ is denoted as $\sqcup \{u, v\}$. $X$ is a \textit{join semi-lattice} if a join exists for every nonempty finite subset of $X$. As customary in this area, we use the term {\em lattice} instead of
{\em join semi-lattice} in this paper for simplicity.  

In the lattice agreement problem \cite{attiya1995atomic}, each process $p_i$ can propose a value $x_i$ in $X$ and must decide on some output $y_i$ also in $X$. An algorithm is said to solve the lattice agreement problem if the following properties are satisfied: 

\textbf{Downward-Validity}: For all $i \in [1..n]$, $x_i \leq y_i$. 

\textbf{Upward-Validity}: For all $i \in [1..n]$, $y_i \leq \sqcup\{x_1,...,x_n\}$.

\textbf{Comparability}: For all $i \in [1..n]$ and $j 
\in [1..n]$, either $y_i \leq y_j$ or $y_j \leq y_i$.

In this paper, all the algorithms that we propose apply join operation to some subset of input values.
Therefore, it is sufficient to focus on the join-closed subset of $X$ that includes all
input values. Let $L$ be the join-closed subset of $X$ that includes all input values. $L$ is also a join semi-lattice. We call $L$ the {\em input sublattice} of $X$. All algorithms proposed in this paper are based on $L$. Since the complexity of our algorithms depend on the height of lattice $L$, we give the formal definitions as below:

\begin{d1}
The height of a value $v$ in a lattice $X$ is the length of longest path from any minimal value to $v$, denoted as $h_X(v)$ or $h(v)$ when it is clear. 
\end{d1}

\begin{d1}
The height of a lattice $X$ is the height of its largest value, denoted as $h(X)$.
\end{d1}

Each process proposes a value and they form a boolean lattice. Thus, the largest value in this lattice is the union of all values which has size of $n$. Therefore, from the definition 2, we have $h(L) \leq n$. 

\subsection{Generalized Lattice Agreement}
In generalized lattice agreement problem, each process may receive a possibly infinite sequence of values as inputs that belong to a lattice at any point of time. Let $x_i^{p}$ denote the $i$th value received by process $p$. The aim is for each process $p$ to learn a sequence of output values $y_j^p$ which satisfies the following conditions:

\textbf{Validity}: any learned value $y_j^p$ is a join of some set of received input values.

\textbf{Stability}: The value learned by any process $p$ is non-decreasing: $j < k \implies y_j^{p} \leq y_k^p$.

\textbf{Comparability}: Any two values $y_j^p$ and $y_k^q$ learned by any two process $p$ and $q$ are comparable. 

\textbf{Liveness}: Every value $x_i^p$ received by a correct process $p$ is eventually included in some learned value $y_k^q$ of every correct process $q$: i.e, $x_i^p \leq y_k^q$.

\section{Lattice Agreement in Synchronous Systems}
\subsection{Lattice Agreement with Known Height}
In this section, we first consider a simpler version of the standard lattice agreement problem by assuming that the height of the input sublattice $L$ is known in advance, i.e, $h(L)$ is given. We propose an algorithm, $LA_{\alpha}$, to solve this problem in $\log h(L)$ synchronous rounds. In section 3.2, we give an algorithm to solve the lattice agreement problem when the height is not given using this algorithm.

Algorithm $LA_{\alpha}$ runs in synchronous rounds. At each round, by calling a \textit{Classifier} procedure (described below), processes within a same group (to be defined later) are classified into different groups.
The algorithm guarantees that any two processes within the same group have equal values and any two processes in different groups have comparable values at the end. Thus, values of all processes are comparable to each other at the end. We present the algorithm by first introducing the fundamental \textit{Classifier} procedure.

\subsubsection{The Classifier Procedure}
The \textit{Classifier} procedure is inspired by the Classifier procedure given by Attiya and Rachman in \cite{attiya1998atomic}, called \textit{AR-Clasifier}, where it is applied to solve the atomic snapshot problem in the shared memory system. The intuition behind the \textit{Classifier} procedure is to classify processes to \textit{master} or \textit{slave} and ensure all \textit{master} processes have values greater than all \textit{slave} processes. 
 
The pseudo-code for \textit{Classifier} is given in Figure \ref{fig:classifier}. It takes two parameters: the input value $v$ and the threshold value $k$. The output is composed of three items: the output value, the classification result and the decision status. The process which calls the \textit{Classifier} procedure should update their value to be the output value. The classification result is either \textit{master} or \textit{slave}. The decision status is a Boolean value which is used to inform whether the invoking process can decide on the output value or not. The main functionality of the \textit{Classifier} procedure is either to tell the invoking process to decide, or to classify the invoking process as a \textit{master} or a \textit{slave}. Details of the \textit{Classifier} procedure are shown below:\\
\h Line 1-3: The invoking process sends a message with its input value $v$ and the threshold value $k$ to all. It then collects all the received values associated with the threshold value $k$ in a set $U$.\\
\h Line 5-6: It checks whether all values in $U$ are comparable to the input value. If they are comparable, it terminates the \textit{Classifier} procedure and returns the input value as the output value and \textit{true} as the decision status. \\
\h Line 8-12: It performs classification based on received values. Let $w$ be the join of all received values associated with the threshold value $k$. If the height of $w$ in lattice $L$ is greater than the threshold value $k$, then the \textit{Classifier} returns $w$ as the output value, \textit{master} as the classification result and false as the decision status. Otherwise, it returns the input value as the output value, \textit{slave} as the classification result and false as the decision status. From the classification steps, it is easy to see that the processes classified as \textit{master} have values greater than those classified as \textit{slave} because $w$ is the join of all values in $U$.

There are four main differences between the \textit{AR-Classifier} and our {\em Classifier}: 1) The \textit{AR-Classifier} is based on the shared memory model whereas our algorithm is based on synchronous message passing. 2) The \textit{AR-Classifier} does not allow early termination. 3) Each process in the \textit{AR-Classifier} needs values from all processes whereas our \textit{Classifier} uses values only from processes within its group. 4) The \textit{AR-Classifier} procedure requires the invoking process to read values of all processes again if the invoking process is classified as \textit{master} where as our algorithm needs to receive values from all processes only once.

\begin{figure}[htb]
\fbox{\begin{minipage}[t]  {3.1in}
\noindent
\underline{{\bf \textit{Classifier}$(v,k)$:}} \\
$v$: input value \h $k$: threshold value 

\begin{algorithmic}[1]
\STATE Send $(v, k)$ to all
\STATE Receive messages of the form $(-, k)$ 
\STATE Let $U$ be values contained in received messages
\item
\STATE /* Early  Termination */
\STATE {\bf if} {$|U| = 0$ or  $\forall u \in U: v \leq u \vee u \leq v$}  
\STATE \h {\bf return} ($v$, $-$, $true$)
\item
\STATE /* Classification */
\STATE Let $w := \sqcup\{u: u \in U\}$
\STATE {\bf if} {$h(w) > k$} 
\STATE \h {\bf return} ($w$, \textit{master}, $false$)
\STATE {\bf else}
\STATE \h {\bf return} ($v$, \textit{slave}, $false$)
\end{algorithmic}
\caption{\textit{Classifier} \label{fig:classifier}}
\end{minipage}
} 
\fbox{\begin{minipage}[t]  {2.1in}
\noindent
\underline{$\bm{LA_{\alpha}}(H, x_i)$ for $p_i$:} \\
$H$: given height \h  $x_i$: input value 

\begin{algorithmic}[1]
\STATE $v_i^{1} := x_i$ // value at round 1 
\STATE $l_i := \frac{H}{2}$ // label
\STATE $decided := false$
\STATE 
\STATE {\bf for} {$r := 1$ to $\log H + 1$}
\STATE \h ($v_i^{r + 1}, class, decided$) \\
\h\h$:= Classifier(l_i, v_i^{r})$
\STATE \h {\bf if} {\textit{decided}} 
\STATE \h \h {\bf return} $v_i^{r + 1}$
\STATE \h {\bf else if} {$class = master$} 
\STATE \h \h $l_i := l_i + \frac{H}{2^{r + 1}}$
\STATE \h {\bf else} 
\STATE \h \h $l_i := l_i - \frac{H}{2^{r + 1}}$
\STATE {\bf end for}
\end{algorithmic}
\caption{Algorithm $LA_{\alpha}$ \label{fig:la_alpha}}
\end{minipage}
} 
\vspace*{-.2in}
\end{figure}

\subsubsection{Algorithm $LA_{\alpha}$}
Algorithm $LA_{\alpha}$ (shown in Figure \ref{fig:la_alpha}) runs in at most $\log h(L)$ rounds. It assumes knowledge of $H = h(L)$, the height of the input lattice. Let $x_i$ denote the initial input value of process $i$, $v_i^{r}$ denote the value held by process $i$ at the beginning of round $r$, and $class$ denote the classification result of the \textit{Classifier} procedure. The $class$ indicates whether the process is classified as a \textit{master} or a \textit{slave}. The \textit{decided} variable shows whether the process has decided or not. Each process $i$ has a label denoted as $l_i$. This label is updated at each round. Processes which have the same label $l$ are said to be in the same group with label $l$.
The definitions of $label$ and $group$ are formally given as:

\begin{d1}[label]
Each process has a $label$, which serves as a knowledge threshold and is passed as the threshold value $k$ whenever the process calls the \textit{Classifier} procedure.
\end{d1}

\begin{d1}[group]
A $group$ is a set of processes which have the same label. The label of a group is the label of the processes in this group.
\end{d1}

A process has {\em decided} if it has set its decision status to true. Otherwise, it is undecided. At each round $r$, an undecided process invokes the \textit{Classifier} procedure with its current value and its current label $l_i$ as parameters $v$ and $k$, respectively. Since each process passes its label as the threshold value $k$ when invoking the \textit{Classifier} procedure, \textit{line 2} of the \textit{Classifier} is equivalent to receiving messages from processes within the same group; that is, at each round, a process performs the \textit{Classifier} procedure within its group. Processes which are in different groups do not affect each other. At round $r$, by invoking the \textit{Classifier} procedure, each process $i$ sets $v_i^{r + 1}$, $class$ and \textit{decided} to the returned output value, the classification result and the decision status. Each process first checks the value of \textit{decided}. If it is true, process $i$ decides on $v_i^{r + 1}$ and terminates the algorithm. Otherwise, if it is classified as a \textit{master}, it increases its label by $\frac{H}{2^{r + 1}}$. If it is classified as a \textit{slave}, it decreases its label by $\frac{H}{2^{r + 1}}$. Now we show how the \textit{Classifier} procedure combined with this label update mechanism makes any two processes have comparable values at the end. 

 Let $G$ be a group of processes at round $r$. Let $M(G)$ and $S(G)$ be the group of processes which are classified as \textit{master} and \textit{slave}, respectively, when they run the \textit{Classifier} procedure in group $G$. We say that $G$ is the parent of $M(G)$ and $S(G)$. Thus, $M(G)$ and $S(G)$ are both groups at round $r + 1$. Process $i \in M(G)$ or $i \in S(G)$ indicates that $i$ does not decide in group $G$ at round $r$. Initially, all process have the same label $\frac{H}{2}$ and are in the same group with label $\frac{H}{2}$. When they execute the \textit{Classifier}, they will be classified into different groups. We can view the execution as processes traversing through a binary tree. Initially, all of them are at the root of the tree. As the program executes, if they are classified as \textit{master}, then they go to the right child. Otherwise, they go to the left child. 

Before we prove the correctness of the given algorithm, we first give some useful properties satisfied by the \textit{Classifier} procedure. Although Lemma \ref{lem:cls} is similar to a lemma given in [5], it is discussed here in message passing systems and the proofs are different. 

\begin{lemma}\label{lem:cls}
Let $G$ be a group at round $r$ with label $k$. Let $L$ and $R$ be two nonnegative integers such that $L \leq k \leq R$. If $L < h(v_i^{r}) \leq R$ for every process $i \in G$, and $h(\sqcup\{v_i^{r}: i \in G\}) \leq R$, then \\
(p1) for each process $i \in M(G)$, $k < h(v_i^{r + 1}) \leq R$\\
(p2) for each process $i \in S(G)$, $L < h(v_i^{r + 1}) \leq k$\\
(p3) $h(\sqcup\{v_i^{r + 1}: i \in M(G)\}) \leq R$ \\
(p4) $h(\sqcup\{v_i^{r + 1}: i \in S(G)\}) \leq k$, and\\
(p5) for each process $i \in M(G)$, $v_i^{r + 1} \geq \sqcup\{v_i^{r + 1}: i \in S(G)\}$   
\begin{proof}
\textbf{(p1)-(p3):} Immediate from the \textit{Classifier} procedure.\\
\textbf{(p4):}  Since $S(G)$ is a group of processes which are at round $r + 1$, all processes in $S(G)$ are correct (non-faulty) at round $r$. So, all processes in $S(G)$ must have received values of each other in the \textit{Classifier} procedure at round $r$ in group $G$. Thus, $h(\sqcup\{v_i^{r + 1}: i \in S(G)\}) \leq k$, otherwise all of them should be in  group $M(G)$ instead of $S(G)$, according to the condition at \textit{line 9} of the \textit{Classifier} procedure. \\
\textbf{(p5):} Since all processes in $S(G)$ are correct at round $r$, all processes in $M(G)$ must have received values of all processes in $S(G)$ in the \textit{Classifier} procedure at round $r$. Any process which proceeds to group $M(G)$ takes the join of all received values at round $r$, according to \textit{line 10}. Thus, for every process $i \in M(G)$, $v_i^{r + 1} \geq \sqcup\{v_i^{r + 1}: i \in S(G)\}$.
\end{proof}
\end{lemma}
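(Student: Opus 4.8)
The plan is to read all five properties directly off the definition of the \textit{Classifier} procedure, with a single structural observation about synchronous rounds doing the real work. The observation is: if a process $j$ belongs to a group at round $r+1$ (that is, $j \in M(G) \cup S(G)$ for its round-$r$ group $G$), then $j$ did not crash during round $r$, so the message $(v_j^{r}, k)$ it broadcast in \textit{line~1} of its round-$r$ \textit{Classifier} call was delivered to \emph{every} process --- in particular to every member of $M(G)\cup S(G)$. I will also use that each process sends its round-$r$ message to itself (so a process always has its own current value in the set $U$ it collects in \textit{line~3}), and that, since the label uniquely identifies the group within a round, every message of the form $(-,k)$ received at round $r$ comes from $G$; hence $U_i \subseteq \{v_j^{r} : j \in G\}$ for every $i \in G$.

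For (p1) and (p3): fix $i \in M(G)$, write $U_i$ for the set it collects and $w_i := \sqcup U_i$. Since $i$ is classified \textit{master} it outputs $v_i^{r+1} = w_i$ with $h(w_i) > k$, which is the lower bound of (p1); the upper bound follows from $w_i \le \sqcup\{v_j^{r} : j \in G\}$, monotonicity of $h(\cdot)$, and the hypothesis $h(\sqcup\{v_j^{r} : j \in G\}) \le R$. The same join-monotonicity estimate applied to $\sqcup\{w_j : j \in M(G)\}$ gives (p3). For (p2): fix $i \in S(G)$; it keeps $v_i^{r+1} = v_i^{r}$, so the lower bound is just the hypothesis $L < h(v_i^{r})$, while for the upper bound $v_i^{r} \in U_i$ gives $v_i^{r} \le w_i$ and being classified \textit{slave} means $h(w_i) \le k$, hence $h(v_i^{r+1}) \le k$.

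Properties (p4) and (p5) are where the synchrony observation is actually needed. Because every $j \in S(G)$ survives round $r$, every $i \in M(G)\cup S(G)$ receives $v_j^{r}$, so $\{v_j^{r} : j \in S(G)\} \subseteq U_i$ for all such $i$. For (p4) (assuming $S(G)\neq\emptyset$; otherwise the claim is vacuous), take any $i \in S(G)$: then $\sqcup\{v_j^{r+1} : j \in S(G)\} = \sqcup\{v_j^{r} : j \in S(G)\} \le \sqcup U_i = w_i$ and $h(w_i) \le k$ since $i$ is a \textit{slave}. For (p5), take any $i \in M(G)$: then $v_i^{r+1} = w_i = \sqcup U_i \ge \sqcup\{v_j^{r} : j \in S(G)\} = \sqcup\{v_j^{r+1} : j \in S(G)\}$, using again that slaves keep their value.

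The only genuinely non-routine point --- which I would isolate as the observation above --- is the synchrony fact behind (p4) and (p5): appearing in a round-$(r+1)$ group certifies that a process completed its round-$r$ broadcast, so its round-$r$ value is seen by every other member of the same parent group. Everything else is bookkeeping with $h(\cdot)$ and the lattice order. For honest accounting one should also note that processes which decide via early termination in \textit{line~6} at round $r$ are by definition excluded from both $M(G)$ and $S(G)$, and that the empty-set instances of the join in (p3)--(p5) are trivial or vacuous.
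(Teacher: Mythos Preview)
Your proof is correct and follows essentially the same approach as the paper's. The paper dismisses (p1)--(p3) as ``immediate from the \textit{Classifier} procedure'' while you spell them out, and for (p4)--(p5) both arguments hinge on the same synchrony fact you isolate as your main observation: a process appearing in $M(G)\cup S(G)$ must have completed its round-$r$ broadcast, so every surviving member of $G$ received its value. Your direct argument for (p4) (pick any slave $i$; its $w_i$ already dominates the join of all slaves' values and has height $\le k$) is just the contrapositive unwinding of the paper's one-line contradiction (``otherwise all of them should be in $M(G)$''); there is no substantive difference.
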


\begin{lemma}\label{lem:lat}
Let $x$ be a value from a lattice $L$, and $V$ be a set of values from $L$. Let $U$ be any subset of $V$. If $x$ is comparable with $\forall v \in V$, then $x$ is comparable with $\sqcup\{u~ |~ u \in U\}$.  
\begin{proof}
If $\forall u \in U: u \leq x$, then $\sqcup\{u ~| ~u \in U\} \leq x$. Otherwise, $\exists y \in U: x \leq y$. Since $y \leq \sqcup\{u ~|~ u \in U\}$, so $x \leq \sqcup\{u ~|~ u \in U\}$.
\end{proof}
\end{lemma}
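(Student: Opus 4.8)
The plan is a two-case argument resting only on the defining property of the join as a least upper bound. Note first that since $U \subseteq V$ and $x$ is comparable with every element of $V$, $x$ is in particular comparable with every $u \in U$; that is, for each $u \in U$ we have $u \le x$ or $x \le u$.

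In the first case, suppose $u \le x$ for every $u \in U$. Then $x$ is an upper bound of the finite set $U$, and since $\sqcup\{u \mid u \in U\}$ is the \emph{least} upper bound of $U$, we conclude $\sqcup\{u \mid u \in U\} \le x$, so the two values are comparable. In the remaining case there is some $u_0 \in U$ with $u_0 \not\le x$; by the comparability of $x$ with $u_0$ this forces $x \le u_0$, and since $u_0 \le \sqcup\{u \mid u \in U\}$ (every element of a set lies below the join of that set) transitivity of $\le$ gives $x \le \sqcup\{u \mid u \in U\}$, so again the two values are comparable. These two cases are exhaustive, which finishes the proof.

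I do not expect any real obstacle here: the statement is a routine consequence of the universal/existential characterization of joins. The only points requiring a little care are to invoke the hypothesis that $x$ is comparable with \emph{all} of $V$ at the specific witness $u_0$ in the second case, and to use in the first case that $\sqcup\{u \mid u \in U\}$ is the least upper bound (not merely some upper bound) of $U$. Implicitly the lemma is applied with $U$ nonempty, so that $\sqcup\{u \mid u \in U\}$ is well-defined in the join semi-lattice $L$; one could state this assumption explicitly if desired.
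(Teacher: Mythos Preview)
Your proof is correct and follows essentially the same two-case argument as the paper: either every $u \in U$ lies below $x$ (whence the join does too), or some element of $U$ lies above $x$ (whence the join does too). The paper's version is terser but structurally identical.
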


\begin{lemma}\label{lem:decideComp}
If process $i$ decides at round $r$ on value $y_i$, then $y_i$ is comparable with $v_j^r$ for any correct process $j$. 
\begin{proof}
Let process $i$ decide in group $G$ at round $r$. Consider the two cases below:

Case 1: $j \not \in G$. Let $G'$ be a group at the maximum round $r'$ such that both $i$ and $j$ belong to $G'$. Then, either $i \in M(G') \wedge j \in S(G')$ or $j \in M(G') \wedge i \in S(G')$. We only consider the case $i \in M(G') \wedge j \in S(G')$. The other case can be proved similarly. From ($p5$) of Lemma \ref{lem:cls}, we have $\sqcup \{v_p^{r}: p \in S(G')\} \leq y_i$. Since $j \in S(G')$, then $v_j^r \leq \sqcup \{v_p^{r}: p \in S(G')\}$. Thus, $v_j^r \leq y_i$. For the other case, we have $y_i \leq v_j^r$. Therefore, $y_i$ is comparable with $v_j^r$. 

Case 2: $j \in G$, since process $j$ is correct, then $i$ must have received $v_j^r$ at round $r$. Thus, by {\em line 5} of the \textit{Classifier} procedure, we have that $y_i$ is comparable with $y_j^r$.  
\end{proof}
\end{lemma}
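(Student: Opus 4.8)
The plan is to compare $y_i$ with $v_j^{r}$ through the node of the classification tree where the executions of $i$ and $j$ diverge. Two facts drive the argument, and I would establish them first. \emph{Monotonicity:} in \textit{Classifier}$(v,k)$ a process returns either its own input $v$ (on early termination or when classified \textit{slave}) or $w=\sqcup\{u:u\in U\}$, and since a process sends $(v,k)$ to itself we have $v\in U$ and hence $v\le w$; thus $v_p^{s}\le v_p^{s+1}$ for every process $p$ and round $s$, and in particular $y_i=v_i^{r}$ when $i$ decides at round $r$. \emph{Subtree bound:} if $H$ is a group occurring at round $s$ and $b_H:=\sqcup\{v_p^{s}:p\in H\}$, then every descendant of $H$ in the classification tree holds a value $\le b_H$ at every later round; this follows by induction on the round, since a \textit{slave} keeps its value and a \textit{master} replaces it by the join of values currently held inside its group, all of which are descendants of $H$ and so $\le b_H$ by the inductive hypothesis. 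Given these, I would let $i$ decide at round $r$ in group $G$, fix a correct process $j$, and split on whether $j\in G$.

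If $j\in G$: being correct, $j$ executes line~1 of \textit{Classifier} in round $r$ with threshold $k$ equal to the label of $G$ and sends $(v_j^{r},k)$, which $i$ receives since the channels are synchronous and reliable, so $v_j^{r}\in U$ in $i$'s invocation. Because $i$ decides, the line~5 guard held for $i$, i.e.\ every element of $U$ is comparable with $i$'s input value, which equals $v_i^{r}=y_i$; taking $u=v_j^{r}$ settles this case.

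If $j\notin G$: since all processes share the root group at round $1$, there is a largest round $r'<r$ at which $i$ and $j$ lie in a common group $G'$, and at round $r'$ they are separated, one into $M(G')$ and the other into $S(G')$. Set $b:=\sqcup\{v_p^{r'+1}:p\in S(G')\}$. If $i\in M(G')$ (so $j\in S(G')$), then (p5) of Lemma~\ref{lem:cls} gives $v_i^{r'+1}\ge b$, so by monotonicity $y_i=v_i^{r}\ge v_i^{r'+1}\ge b$, while the subtree bound applied to $S(G')$ (of which $j$ is a descendant) gives $v_j^{r}\le b$; chaining, $v_j^{r}\le y_i$. If instead $i\in S(G')$ (so $j\in M(G')$), the subtree bound gives $v_i^{r}\le b$, (p5) gives $v_j^{r'+1}\ge b$, and monotonicity gives $v_j^{r}\ge v_j^{r'+1}$, whence $y_i=v_i^{r}\le b\le v_j^{r}$. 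Either way $y_i$ and $v_j^{r}$ are comparable.

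The main obstacle is this second case: property (p5) pins down values only at round $r'+1$, but a process sent to $S(G')$ may later be re-classified \textit{master} in a descendant group and grow its value, so (p5) on its own does not bound $v_j^{r}$ (respectively $v_i^{r}$) by $b$; the subtree-bound invariant is exactly what caps that growth, and one must pair it with monotonicity in the correct direction in each sub-case. A minor subtlety is that $j$ may have terminated before round $r$, so that it lies in no group at round $r$; then only Case~2 can occur, and there we only use $v_j^{r}\ge v_j^{r'+1}$, which still holds because $j$'s value stays frozen at its decided value.
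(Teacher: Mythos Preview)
Your proof is correct and follows the same two-case structure as the paper's: split on whether $j$ lies in $i$'s group $G$ at round $r$, and if not, trace back to the last common ancestor $G'$ and invoke property~(p5). Your version is in fact more careful than the paper's. The paper asserts directly ``from~(p5)'' that $\sqcup\{v_p^{r}:p\in S(G')\}\le y_i$, but (p5) only controls values at round $r'+1$, not at round $r$; a slave in $S(G')$ may later be re-classified \textit{master} within a descendant group and strictly grow its value. Your explicit \emph{monotonicity} and \emph{subtree-bound} invariants are precisely what is needed to carry the bound from round $r'+1$ to round $r$, and the paper leaves this implicit.

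One small point on your closing remark: when $j$ has terminated before round~$r$ you are indeed in your second case ($j\notin G$), but the justification you give (``there we only use $v_j^{r}\ge v_j^{r'+1}$'') covers only the sub-case $j\in M(G')$. In the sub-case $j\in S(G')$ you need $v_j^{r}\le b$, and that also still holds because the subtree bound applies to $j$'s frozen value $v_j^{r''}$; it would be cleaner to say so.
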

Now we show that any two processes decide on comparable values. 
\begin{lemma}\label{lem:comp_alpha}
(Comparability) Let process $i$ and $j$ decide on $y_i$ and $y_j$, respectively. Then $y_i$ and $y_j$ are comparable.
\begin{proof}
Let process $i$ and $j$ decide at round $r_i$ and $r_j$, respectively. Without loss of generality, assume $r_i \leq r_j$. At round $r_i$, from Lemma \ref{lem:decideComp} we have $y_i$ is comparable with $v_k^r$ for any correct undecided process $k$. Let $V = \{v_k^{r_i} ~|~ process ~ k~ undecided ~ and ~correct\}$. Since $r_j \geq r_i$, $y_j$ is at most the join of a subset of $V$. Thus, from Lemma \ref{lem:lat} we have $y_i$ and $y_j$ are comparable.
\end{proof}
\end{lemma}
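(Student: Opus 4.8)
The plan is to bootstrap from the two tools already in hand: Lemma~\ref{lem:decideComp}, which compares a decided value with the round-$r_i$ value of any \emph{correct} process, and Lemma~\ref{lem:lat}, which guarantees that a join of values all comparable with $x$ is again comparable with $x$. First I would normalize: assume $r_i \le r_j$, where $i$ decides at round $r_i$ and $j$ at round $r_j$. Inspecting the early-termination branch of the \textit{Classifier} shows that a process decides exactly on the value it held entering that round, so $y_i = v_i^{r_i}$ and $y_j = v_j^{r_j}$, and since $j$ is present through round $r_j$ it is in particular present, in some group, at round $r_i$.

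The key lemma I would prove first is a \emph{decomposition} statement, by induction on $r$: for any round $r_0$, any $r \ge r_0$, and any process $p$ present at round $r$, the value $v_p^r$ is a join of some nonempty set of values $v_k^{r_0}$, each such $k$ being present at round $r_0$ and, moreover, lying in the same group as $p$ at round $r_0$. The base case is immediate. For the step: if $p$ is classified \textit{slave} at round $r$ (or decides) it keeps its value, so the statement persists; if $p$ is classified \textit{master}, then $v_p^{r+1}$ is the join of the values $p$ received at round $r$, each of which is $v_q^r$ for some process $q$ in $p$'s round-$r$ group, since labels travel with messages, and hence $q$ already shares $p$'s group at round $r_0$ because groups only split over time; applying the induction hypothesis to each such $q$ and taking unions closes the induction. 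This lets me re-express any decided value as a join of values held at a chosen earlier round by processes of a single group.

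Next I would split on whether $j$'s group at round $r_i$, call it $\Gamma_j$, equals $i$'s group at round $r_i$, call it $G$. If $\Gamma_j \neq G$, let $G'$ be their lowest common ancestor group, which splits at some round $r' < r_i$; since neither process decides before round $r_i$, one of the two lineages descends from $M(G')$ and the other from $S(G')$. For whichever of $i, j$ descends from $S(G')$, the decomposition lemma with $r_0 = r'+1$ writes its decided value as a join of values $v_s^{r'+1}$ with $s \in S(G')$, which by property~(p5) of Lemma~\ref{lem:cls} lies below $v_m^{r'+1}$ for every $m \in M(G')$, and hence, values being nondecreasing along a run, below the decided value of the one descending from $M(G')$; so $y_i$ and $y_j$ are comparable. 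If $\Gamma_j = G$: should $j$ decide in $G$ at round $r_i$, then $r_j = r_i$ and, since the correct process $i$ broadcasts $y_i = v_i^{r_i}$ on line~1 of the \textit{Classifier} before deciding, $j$ receives $y_i$ and so it passes $j$'s early-termination comparability test against $y_j$; if $j$ is a \textit{master} in $G$, it likewise receives $i$'s broadcast of $y_i$ and joins over it, so $y_j = v_j^{r_j} \ge v_j^{r_i+1} \ge y_i$; and if $j$ is a \textit{slave} in $G$, the decomposition lemma writes $y_j$ as a join of values $v_p^{r_i}$ with $p \in S(G)$, and every such $p$ survives round $r_i$, so $i$ received $v_p^{r_i}$ at round $r_i$ and it is therefore comparable with $y_i = v_i^{r_i}$ by the early-termination test, whence Lemma~\ref{lem:lat} makes $y_j$ comparable with $y_i$.

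The hard part will be the faulty-process bookkeeping hidden in the $\Gamma_j = G$ case: a process crashing in the middle of round $r_i$ can still get its round-$r_i$ value absorbed into a \textit{master}'s join, and that value need not be comparable with $y_i$ at all, so one cannot simply assert that $y_j$ is a join of values each comparable with $y_i$. What rescues the argument is that such a value is only ever absorbed \emph{together} with $i$'s pre-decision broadcast of $y_i$ --- the \textit{Classifier} sends on line~1 before testing for termination, and $i$, being correct, reaches every member of its group --- so it occurs only inside a value that already dominates $y_i$, which is exactly why the \textit{master} subcase must be argued as $y_j \ge y_i$ rather than through Lemma~\ref{lem:lat}. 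The other point needing care is making the decomposition induction carry the ``same group at round $r_0$'' clause, since that is what couples it to property~(p5) in the $\Gamma_j \neq G$ case.
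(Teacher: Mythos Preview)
Your argument is correct, and it is noticeably more careful than the paper's own proof. The paper dispatches the lemma in four lines: it applies Lemma~\ref{lem:decideComp} to make $y_i$ comparable with every $v_k^{r_i}$ for $k$ in the set $V$ of \emph{correct} undecided processes at round $r_i$, asserts without justification that $y_j$ is the join of some subset of $V$, and then invokes Lemma~\ref{lem:lat}. You instead (i) prove an explicit decomposition lemma expressing $v_p^r$ as a join of round-$r_0$ values of processes present at $r_0$ in $p$'s round-$r_0$ group, (ii) split on whether $i$ and $j$ share a group at round $r_i$, and (iii) in the same-group \textit{master} subcase argue $y_j \ge y_i$ directly rather than via Lemma~\ref{lem:lat}. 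In effect your different-groups case re-derives Case~1 of Lemma~\ref{lem:decideComp}, and your same-group subcases refine its Case~2, so you never actually invoke that lemma as a black box.

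What your route buys is exactly the point you flag at the end: the paper's bare assertion that $y_j$ lies in the join-closure of $V$ is not literally justified once faulty processes are taken seriously --- a process that partially sends during round $r_i$ and then crashes can have its round-$r_i$ value absorbed into a master's join without that value being in $V$ or comparable with $y_i$. Your decomposition lemma decomposes $y_j$ over processes merely \emph{present} (not correct) at the anchor round, and your case analysis then closes the gap: in the master subcase any such stray value enters only alongside $i$'s own broadcast of $y_i$, forcing $y_j \ge y_i$ outright; in the slave subcase every contributing process survived round $r_i$ and hence was received and checked by $i$'s early-termination test. The paper's proof is shorter; yours actually confronts the difficulty it leaves implicit.
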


Now we prove that all processes decide within $\log H + 1$ rounds by showing all processes in the same group at the beginning of round $\log H + 1$ have equal values, given by Lemma \ref{lem:dec} and Lemma \ref{lem:term}. Since Lemma \ref{lem:dec} and Lemma \ref{lem:term} and the corresponding proofs are similar to the ones given in \cite{attiya1998atomic}, the proofs are omitted here and can be found in the full paper. Proof of Lemma \ref{lem:dec} is based on ($p1$-$p4$) of Lemma \ref{lem:cls} by induction. Proof of Lemma  \ref{lem:term} is based on Lemma \ref{lem:dec}. 

\begin{lemma}\label{lem:dec}
Let $G$ be a group of processes at round $r$ with label $k$. Then \\
(1) for each process $i \in G$, $k - \frac{H}{2^r} < h(v_i^r) \leq k + \frac{H}{2^r}$ \\
(2) $h(\sqcup\{v_i^r: i \in G\}) \leq k + \frac{H}{2^r}$




\end{lemma}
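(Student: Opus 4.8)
The plan is to prove (1) and (2) simultaneously by induction on the round number $r$, with the inductive step being a direct application of Lemma~\ref{lem:cls} (properties $(p1)$--$(p4)$) together with the label-update rule of $LA_{\alpha}$. The key observation is that the quantities $\pm H/2^{r}$ appearing in the stated bounds are exactly the increments used to update labels in $LA_{\alpha}$, so the two bookkeeping schemes telescope against each other.

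For the base case $r=1$, there is a single group $G$ consisting of all processes, with label $k=H/2$, and $v_i^1=x_i$ for every $i$. Part (2) holds because $\sqcup\{v_i^1:i\in G\}=\sqcup\{x_1,\dots,x_n\}$ is the top element of $L$, whose height is $h(L)=H=k+H/2^{1}$. For part (1), $h(v_i^1)=h(x_i)\le h(L)=H=k+H/2^{1}$, while the lower bound $k-H/2^{1}<h(v_i^1)$ reduces to $0<h(x_i)$; this holds under the usual convention that treats a minimal element separately (if such a value is proposed, its owner decides immediately in the Classifier at line~5 and never joins a nontrivial group).

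For the inductive step, assume (1)--(2) hold for every group at round $r$ and let $G'$ be a group at round $r+1$. By construction of $LA_{\alpha}$, all processes of $G'$ lay in a common group $G$ at round $r$ --- all of them undecided and all passing the same threshold $k=$ label of $G$ to the Classifier --- and $G'$ is either $M(G)$ or $S(G)$. Apply the inductive hypothesis to $G$ and invoke Lemma~\ref{lem:cls} with $L:=k-H/2^{r}$ and $R:=k+H/2^{r}$: its hypotheses ($L\le k\le R$, $L<h(v_i^{r})\le R$ for every $i\in G$, and $h(\sqcup\{v_i^{r}:i\in G\})\le R$) are precisely parts (1) and (2) of the inductive hypothesis. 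If $G'=M(G)$, then its label is $k'=k+H/2^{r+1}$, so $k'-H/2^{r+1}=k$ and $k'+H/2^{r+1}=R$; hence $(p1)$ gives $k<h(v_i^{r+1})\le R$, which is (1) for $G'$, and $(p3)$ gives $h(\sqcup\{v_i^{r+1}:i\in M(G)\})\le R=k'+H/2^{r+1}$, which is (2). If $G'=S(G)$, then its label is $k'=k-H/2^{r+1}$, so $k'-H/2^{r+1}=L$ and $k'+H/2^{r+1}=k$; hence $(p2)$ gives $L<h(v_i^{r+1})\le k$, which is (1), and $(p4)$ gives $h(\sqcup\{v_i^{r+1}:i\in S(G)\})\le k=k'+H/2^{r+1}$, which is (2). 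This closes the induction.

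I do not expect a genuine obstacle here. The only points needing care are: (i) verifying that each round-$(r+1)$ group is exactly $M(G)$ or $S(G)$ for a single round-$r$ group $G$, so that the inductive hypothesis is applied to a well-defined parent carrying one label $k$; (ii) the arithmetic that $k\pm H/2^{r+1}$ cancels cleanly against $L=k-H/2^{r}$ and $R=k+H/2^{r}$; and (iii) the $r=1$ lower-bound edge case noted above. The substantive content was already isolated in Lemma~\ref{lem:cls}; this lemma is merely its inductive, telescoped form.
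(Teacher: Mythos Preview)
Your proposal is correct and follows essentially the same approach the paper indicates: the paper omits the proof but states explicitly that it is ``based on ($p1$--$p4$) of Lemma~\ref{lem:cls} by induction,'' which is precisely your simultaneous induction on $r$ using $L=k-H/2^{r}$, $R=k+H/2^{r}$ and the label-update arithmetic to carry the bounds from $G$ to $M(G)$ and $S(G)$. Your flagged edge case about $h(x_i)=0$ at $r=1$ is a genuine but minor wrinkle the paper also elides; your handling of it (such a proposer satisfies line~5 of the Classifier immediately) is adequate.
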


\begin{lemma}\label{lem:term}
Let $i$ and $j$ be two processes that are within the same group $G$ at the beginning of round $r = \log H + 1$. Then $v_i^{r}$ and $v_j^{r}$ are equal.

\end{lemma}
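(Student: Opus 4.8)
The plan is to prove Lemma~\ref{lem:term} as a direct corollary of Lemma~\ref{lem:dec} applied at round $r = \log H + 1$. First I would instantiate Lemma~\ref{lem:dec} part (1) with this value of $r$: for a group $G$ with label $k$ at the beginning of round $\log H + 1$, every process $i \in G$ satisfies
\[
k - \frac{H}{2^{\log H + 1}} < h(v_i^r) \leq k + \frac{H}{2^{\log H + 1}}.
\]
Since $\frac{H}{2^{\log H + 1}} = \frac{1}{2}$, this says $k - \tfrac12 < h(v_i^r) \leq k + \tfrac12$, i.e.\ $h(v_i^r)$ lies in an open-closed interval of length $1$. Because heights are nonnegative integers, there is exactly one integer in that range, so all processes $i, j \in G$ have $h(v_i^r) = h(v_j^r)$. (Here one should note $k$ itself is a half-integer of the form $m/2^{r-1}$... actually by the round count $r = \log H + 1$ the label increments $H/2^{r}$ have reached $1/2$, so labels at this round are integers or half-integers; in either case the interval $(k-\tfrac12, k+\tfrac12]$ contains a unique integer.)

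Equal height is not quite equality of values, so the second step is to upgrade "same height" to "equal." For this I would use Lemma~\ref{lem:dec} part (2): $h(\sqcup\{v_i^r : i \in G\}) \leq k + \tfrac12$, hence the join $w := \sqcup\{v_i^r : i \in G\}$ has height equal to the common height of the individual $v_i^r$ (it cannot be strictly larger, being bounded by $k + \tfrac12$ and hence by the same unique integer, and it cannot be strictly smaller, being an upper bound of each $v_i^r$). Since $v_i^r \leq w$ and $h(v_i^r) = h(w)$ with $h$ the length of a longest chain from a minimal element, the only way a value below $w$ can have the same height as $w$ is to equal $w$: a strict inequality $v_i^r < w$ would give a chain from a minimal element up to $v_i^r$ and then one more step to $w$, forcing $h(w) \geq h(v_i^r) + 1$, a contradiction. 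Therefore $v_i^r = w$ for every $i \in G$, and in particular $v_i^r = v_j^r$.

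The main obstacle is the bookkeeping around the labels and the exact form of the interval in Lemma~\ref{lem:dec}(1) at $r = \log H + 1$ --- one needs the interval $(k - \tfrac12,\, k + \tfrac12]$ to contain exactly one integer, which requires knowing that $k$ is (an integer or) a half-integer and that the relevant values $h(v_i^r)$ are integers. Both facts are immediate from the setup (labels start at $H/2$ and change by $H/2^{r+1}$, and heights are integer-valued), but they must be stated. The second potential subtlety is the step from "$v_i^r \le w$ and equal heights" to "$v_i^r = w$"; this is exactly where the definition of height as longest-chain-length is used, and it is worth spelling out as above rather than treating as obvious. Given that Lemma~\ref{lem:dec} is assumed, the remaining argument is short, which is consistent with the authors' remark that the proof is deferred to the full version.
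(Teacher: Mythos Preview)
Your proposal is correct and follows essentially the same approach the paper indicates: the authors state that the proof of Lemma~\ref{lem:term} is based on Lemma~\ref{lem:dec} (and is analogous to the argument in \cite{attiya1998atomic}), and your argument instantiates exactly that lemma at $r=\log H+1$, pins down a unique integer height from part~(1), and then uses part~(2) on the join to force equality. One small simplification: you do not actually need the label bookkeeping, since \emph{any} half-open interval $(k-\tfrac12,\,k+\tfrac12]$ of length~$1$ contains exactly one integer regardless of whether $k$ is an integer, a half-integer, or anything else.
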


\begin{lemma}\label{lem:decide}
All processes decide within $\log H + 1$ rounds.
\begin{proof}
From Lemma \ref{lem:term}, we know any two processes which are in the same group at the beginning of round $\log H + 1$ have equal values. Then, the condition in \textit{line 5} of \textit{Classifier} procedure is satisfied. Thus, all undecided processes decide at round $\log H + 1$.
\end{proof}
\end{lemma}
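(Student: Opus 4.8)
The plan is to derive Lemma~\ref{lem:decide} directly from Lemma~\ref{lem:term} and the early-termination test on \textit{line 5} of the \textit{Classifier} procedure. Concretely, I would show that every process which is still undecided at the \emph{beginning} of round $r = \log H + 1$ takes the early-termination branch when it invokes the \textit{Classifier} in that round, and therefore decides during round $\log H + 1$.

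First I would fix an arbitrary process $i$ that is undecided at the start of round $r = \log H + 1$, and let $G$ be its group at that round with label $k = l_i$. When $i$ executes \textit{Classifier}$(v_i^{r}, k)$, \textit{line 2} discards every received message whose threshold component differs from $k$; since distinct groups at the same round carry distinct labels (each label update adds or subtracts $H / 2^{r+1}$, so the labels of the nodes of the classification tree never collide), the set $U$ collected by $i$ consists exactly of the round-$r$ values broadcast by the processes that lie in $G$ at round $r$. By Lemma~\ref{lem:term}, every such process $j$ satisfies $v_j^{r} = v_i^{r}$, hence every $u \in U$ equals $v_i^{r}$ and in particular $v_i^{r} \leq u$. (Also $|U| \geq 1$, since $i$ sends its message to itself; and the degenerate case $|U| = 0$ would satisfy the test as well.) Therefore the condition on \textit{line 5} of the \textit{Classifier} holds, the procedure returns decision status \textit{true}, and $i$ decides on the returned output value $v_i^{r+1} = v_i^{r}$. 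Since $i$ was arbitrary, every process that has not already decided in an earlier round decides in round $\log H + 1$, so all processes decide within $\log H + 1$ rounds.

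The only step that is not completely routine is identifying the contents of $U$ in the last round as a set of pairwise equal values, and this is precisely where Lemma~\ref{lem:term} does the work; its proof (an induction built on properties (p1)--(p4) of Lemma~\ref{lem:cls}, deferred to the full version) is the real obstacle behind this statement. A small point worth spelling out is that processes which crash \emph{during} round $\log H + 1$ cause no difficulty: any value such a process could have broadcast at round $r$ still equals $v_i^{r}$ by Lemma~\ref{lem:term}, so it cannot violate the comparability check on \textit{line 5}. Beyond invoking Lemma~\ref{lem:term}, no additional combinatorial argument is needed.
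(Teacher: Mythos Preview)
Your proposal is correct and follows essentially the same approach as the paper: invoke Lemma~\ref{lem:term} to conclude that all processes in the same group at round $\log H + 1$ hold equal values, so the early-termination test on \textit{line 5} of the \textit{Classifier} succeeds and every undecided process decides. The additional details you supply (label distinctness across groups, the handling of processes that crash during the last round) are sound but go beyond what the paper spells out.
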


\begin{r1}
Since at the beginning of round $\log H + 1$ all undecided processes have comparable values, $LA_{\alpha}$ only needs $\log H$ rounds. For simplicity, one more round is executed to make all processes decide at \textit{line 5} of the \textit{Classifier} procedure. 
\end{r1}

\begin{theorem}\label{theo:comp_alpha}
Algorithm $LA_{\alpha}$ solves lattice agreement problem in $\log H$ rounds and can tolerate $f < n$ failures.
\begin{proof} \textit{Downward-Validity} follows from the fact that the value of each process is non-decreasing at each round. For \textit{Upward-Validity}, according to the \textit{Classifier} procedure, each process either keeps its value unchanged or takes the join of the values proposed by other processes which could never be greater than $\sqcup\{x_1,...,x_n\}$. For \textit{Comparability}, from Lemma \ref{lem:comp_alpha}, we know for any two process $i$ and $j$, if they decide, then their decision values must be comparable. From Lemma \ref{lem:decide}, we know all processes decide. Thus, comparability holds. 
\end{proof}
\end{theorem}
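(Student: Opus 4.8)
The plan is to verify the three defining properties of lattice agreement in turn, reusing the structural lemmas already proved, and then to address the round bound and fault tolerance separately. \textbf{Downward-Validity} I would get first from the observation that a process's value never decreases across rounds: in the \textit{Classifier} the returned output value is either the input value $v$ (lines 6 and 12) or $w = \sqcup\{u : u \in U\}$ (line 10), and since a process sends $(v,k)$ to itself we have $v \in U$ whenever $|U| > 0$, so $v \leq w$. Hence $x_i = v_i^1 \leq v_i^2 \leq \cdots \leq y_i$. \textbf{Upward-Validity} I would then obtain by a short induction on the round number showing that every value $v_i^r$ ever held is a join of some subset of $\{x_1,\dots,x_n\}$: the base case is $v_i^1 = x_i$, and in the inductive step the only new value a process can adopt is $w$, a join of values it received that round, each of which is by hypothesis a join of a subset of the inputs, so $w$ is too; since any join of subsets of $\{x_1,\dots,x_n\}$ is $\leq \sqcup\{x_1,\dots,x_n\}$, we conclude $y_i \leq \sqcup\{x_1,\dots,x_n\}$.

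\textbf{Comparability} is the substantive property, but it has already been discharged by the preceding lemmas: Lemma \ref{lem:comp_alpha} shows that if processes $i$ and $j$ both decide then $y_i$ and $y_j$ are comparable, and Lemma \ref{lem:decide} (sharpened by the Remark) shows that every process decides, within $\log H$ rounds. Combining these two statements yields comparability of every pair of decision values, and simultaneously the claimed round complexity of $\log H$.

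For fault tolerance, the point I would emphasize is that $LA_{\alpha}$ is effectively wait-free within a group: in each synchronous round a process acts on whatever messages arrive, never blocking for a quorum, and if none arrive at all --- $|U| = 0$, which can happen when every other member of its group has crashed --- the condition on line 5 of the \textit{Classifier} fires immediately and the process decides on its current value. Hence no bound on the number of crashes beyond the trivial $f < n$ is required; and $f < n$ ensures at least one correct process exists, which is all that the "correct process" hypotheses in Lemmas \ref{lem:decideComp}--\ref{lem:decide} actually need. Assembling Downward-Validity, Upward-Validity, Comparability, and termination in $\log H$ rounds completes the proof.

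I expect the only steps requiring genuine care are making the Upward-Validity induction airtight --- in particular tracking that the set $U$ used inside the \textit{Classifier} at round $r$ consists exactly of current round-$r$ values of group members, each inductively a join of inputs --- and confirming that the crash-tolerance argument nowhere secretly relies on receiving a fixed number of messages. Everything else reduces to a direct appeal to the lemmas already established above.
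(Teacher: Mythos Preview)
Your proof is correct and follows essentially the same route as the paper's: Downward-Validity from monotonicity of values, Upward-Validity from the fact that every value is a join of inputs, and Comparability by combining Lemma~\ref{lem:comp_alpha} with Lemma~\ref{lem:decide} (and the Remark for the sharp $\log H$ bound); you simply spell out in more detail what the paper states tersely. One small internal inconsistency: in your fault-tolerance paragraph you invoke the $|U| = 0$ disjunct on line~5, but by your own earlier observation the self-send guarantees $v \in U$, so $|U| \geq 1$ always --- the process still decides when alone in its group, but via the second disjunct $\forall u \in U: v \leq u \vee u \leq v$, which holds trivially when $U = \{v\}$.
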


{\bf Complexity}. Time complexity is $\log H$ rounds. For message complexity, since each process sends $n$ messages per round, $\log H$ rounds results in $n^2\log H$ messages in total. Notice that the number of messages can be further reduced by keeping a set of processes which are not in its group. If a process $p$ receives a message from process $q$ with a threshold value different from its own threshold value, it knows that $q$ is not in its group.
Each process does not send messages to the processes in this set. 

Algorithm $LA_{\beta}$ runs in $\log height(L)$ rounds by assuming that $height(L)$ is given. However, in order to know that actual height of input lattice, we need to know how many distinc values all process propose which needs extra efforts. For this reason, in following sections, we introduce algorithms to solve the lattice agreement problem without this assumption, based on algorithm $LA_{\beta}$.

\subsection{Lattice Agreement with Unknown Height}
In this section, we consider the standard lattice agreement in which the height of the lattice is not known to any process. We propose algorithm, $LA_{\beta}$, (shown in Figure \ref{fig:la_beta}) based on algorithm $LA_{\alpha}$. 

\subsubsection{Algorithm $LA_{\beta}$}
Algorithm $LA_{\beta}$ runs in $\log f + 1$ synchronous rounds. It makes use of algorithm $LA_{\alpha}$ as a building block. Instead of directly agreeing on input values which are taken from a lattice with unknown height, we first do lattice agreement on the failure set that each process knows {\em after one round of broadcast}. The set of all failure sets forms a boolean lattice with union be the join operation and with height of $f$, since there are at most $f$ failures. The algorithm consists of two phases. At \textit{Phase A}, all processes exchange their values. Process $i$ includes $j$ into its failure set if it does not receive value from process $j$ at the first phase. After the first phase, each process has a failure set which contains failed processes it knows. Then in \textit{phase B}, they invoke algorithm $LA_{\alpha}$ with $f$ as the height and its failure set as input. After that, each process {\em decides} on a failure set which satisfies lattice agreement properties. The new failure set of any two process $i$ and $j$ are comparable to each other, i.e, $F_i^{'}$ is comparable to $F_j^{'}$. Equipped with this comparable failure set, each process removes values it received from processes which are in its failure set and decides on the join of the remaining values. 
 
\begin{figure}[htb]
\fbox{\begin{minipage}[t]  {3.0in}
\noindent
\underline{$\bm{LA_{\beta}}$ for $p_i$}
\begin{algorithmic}[1]
\STATE $V_i := \{x_i\}$ // set of values, initially $x_i$
\STATE $F_i := \emptyset$ // set of known failure processes  
\STATE $f :=$ the maximum number of failures 
\item
\STATE /* Exchange Values and Record Failures*/
\STATE \textbf{\textit{Phase A:}}
\STATE Send $V_i$ to all 
\STATE {\bf for} {$j := 1$ to $n$}
\STATE \h{\bf if} {$V_j$ is received from process $j$}
\STATE \h\h $V_i := V_i \cup {V_j}$
\STATE \h{\bf else} 
\STATE \h\h $F_i := F_i \cup {j}$
\STATE {\bf end for}
\item 
\STATE /* LA with Known Height $f$ */
\STATE \textbf{\textit{Phase B:}}
\STATE $F_i^{'} := LA_{\alpha}(f, F_i)$
\STATE $U_i :=$ values from processes in $F_i^{'}$ in \textit{Phase A}
\STATE $C_i := V_i - U_i$  // set of correct values  
\STATE $y_i := \sqcup\{v: v \in C_i\}$
\end{algorithmic}
\caption{Algorithm $LA_{\beta}$ \label{fig:la_beta}}
\end{minipage}
} 
\fbox{\begin{minipage}[t]  {2.2in}
\underline{$\bm{LA_{\gamma}}$ for $p_i$}

\begin{algorithmic}[1]
\STATE $v_i := x_i$ // input value
\STATE $decided := false$
\item
\STATE /* Exchange Values */
\STATE \textbf{\textit{Phase A:}}
\STATE Send $v_i$ to all 
\STATE {\bf for} {$j := 1$ to $n$} 
\STATE \h receive $v_j$ from $p_j$
\STATE \h $v_i := v_i \sqcup v_j$
\STATE {\bf end for}
\item 
\STATE /* Guessing Height */
\STATE \textbf{\textit{Phase B:}}
\STATE $guess := 2$ // guess height
\STATE {\bf while} {($! decided$)} 
\STATE \h $v_i := LA_{\alpha}(guess, v_i)$
\STATE \h $guess := 2*guess$
\STATE {\bf end while}
\item 
\STATE $y_i := v_i$
\end{algorithmic}
\caption{Algorithm $LA_{\gamma}$ \label{fig:la_gamma}}
\end{minipage}
} 
\vspace*{-0.2in}
\end{figure}
The following lemma shows that any two processes decide on comparable values. We only give the sketch of proof, and detailed proof is available in the full paper. 
\begin{lemma}\label{lem:comp_beta}
(Comparability) Let process $i$ and $j$ decide on $y_i$ and $y_j$, respectively. Then $y_i$ and $y_j$ are comparable.
\begin{proof}
(Sketch of proof) According to comparability of $LA_{\alpha}$, all processes have comparable failure sets. Then, the set of values they received at \textit{Phase A} from correct processes must be comparable, i.e, $C_i$ is comparable with $C_j$. Therefore, $y_i$ and $y_j$ are comparable.
\end{proof}
\end{lemma}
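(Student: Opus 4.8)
The plan is to reduce comparability of the outputs $y_i,y_j$ to comparability of the failure sets $F_i',F_j'$ that are decided by the invocation of $LA_\alpha$ in Phase~B, which we already know from Theorem~\ref{theo:comp_alpha}. First I would pin down what each surviving process holds at the end of Phase~A. Since each process broadcasts only its singleton $\{x_k\}$ before updating its set (line~6 precedes the loop), if we write $R_i$ for the set of processes whose Phase~A message $p_i$ receives (with $i\in R_i$, as $p_i$ sends to itself), then $V_i=\{x_k : k\in R_i\}$ and $F_i=[n]\setminus R_i$. It is convenient to regard the elements of $V_i$ as tagged by their sender, so that ``removing the values from processes in $F_i'$'' is unambiguous even when two processes happen to propose equal values.

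Next I would analyse Phase~B. Running $LA_\alpha$ with input $F_i$ and height $f$, Theorem~\ref{theo:comp_alpha} (together with Lemma~\ref{lem:comp_alpha}) gives (i) downward-validity $F_i\subseteq F_i'$ and (ii) $F_i'$ and $F_j'$ are comparable. The crucial computation is that $C_i$ depends on $F_i'$ only: from $F_i\subseteq F_i'$ we get $F_i'\setminus F_i\subseteq [n]\setminus F_i=R_i$, hence $U_i=\{x_k : k\in F_i'\setminus F_i\}$ and
\[
C_i \;=\; V_i\setminus U_i \;=\; \{x_k : k\in R_i\}\setminus\{x_k : k\in F_i'\setminus F_i\} \;=\; \{x_k : k\in [n]\setminus F_i'\}.
\]
Therefore $y_i=\sqcup\{x_k : k\notin F_i'\}$, and symmetrically $y_j=\sqcup\{x_k : k\notin F_j'\}$. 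Now assume without loss of generality $F_i'\subseteq F_j'$; then $[n]\setminus F_j'\subseteq [n]\setminus F_i'$, so $y_j$ is the join of a subset of the values whose join is $y_i$, whence $y_j\le y_i$. The opposite inclusion gives $y_i\le y_j$, so in either case $y_i$ and $y_j$ are comparable.

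The only delicate point I expect is the middle step: showing that after discarding $U_i$ the surviving set $C_i$ is exactly the inputs of the processes outside $F_i'$. This uses downward-validity $F_i\subseteq F_i'$ in two ways — it ensures that each $k\in[n]\setminus F_i'$ already lies in $R_i$, so $p_i$ actually received $x_k$ in Phase~A and keeps it, and it ensures that the extra elements $F_i'\setminus F_i$ added by Phase~B are among the processes $p_i$ heard from, so their values are present in $V_i$ and are correctly removed. Once $C_i$ is identified with $\{x_k:k\notin F_i'\}$, the conclusion is pure monotonicity of join over the comparable sets $[n]\setminus F_i'$ and $[n]\setminus F_j'$.
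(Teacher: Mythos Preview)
Your proof is correct and follows the same route as the paper's sketch: reduce comparability of $y_i,y_j$ to comparability of the decided failure sets $F_i',F_j'$ via $LA_\alpha$, then argue that $C_i$ (and hence $y_i$) is the join of inputs indexed by the complement of $F_i'$, so monotonicity of join finishes. Your version makes explicit a step the paper's sketch glosses over, namely that $C_i=\{x_k:k\notin F_i'\}$ genuinely depends only on $F_i'$ and not on the receive set $R_i$; this is exactly where downward-validity $F_i\subseteq F_i'$ is needed, and you use it correctly.
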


\begin{theorem}
$LA_{\beta}$ solves lattice agreement problem in $\log f + 1$ rounds, where $f < n$ is the maximum number of failures the system can tolerate.
\begin{proof} \textit{Downward-Validity}. Initially, for correct process $i$, $v_i = x_i$. After \textit{Phase A}, since $i$ is correct, so $i$ is not in any failure set of any process. At \textit{Phase B}, process $i$ invokes algorithm $LA_{\alpha}$ with failure set as the input value. Thus, according to the \textit{Upward-Validity} of $LA_{\alpha}$, $i$ is not included in $F_i^{'}$. So, $x_i \in C_i$. Therefore, $x_i \leq y_i$. \textit{Upward-Validity} is immediate from the fact that each process receives at most all values by all processes. \textit{Comparability} follows from Lemma \ref{lem:comp_beta}.
\end{proof}
\end{theorem}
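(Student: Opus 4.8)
The plan is to check the round complexity first and then verify the three defining properties of lattice agreement --- Downward-Validity, Upward-Validity, and Comparability --- reusing the guarantees already proved for $LA_{\alpha}$ (Theorem~\ref{theo:comp_alpha}) and the comparability lemma for $LA_{\beta}$ (Lemma~\ref{lem:comp_beta}). Phase A is one synchronous all-to-all round, and Phase B is a single call to $LA_{\alpha}$ with height parameter $f$. Since the failure sets that processes exchange live in a boolean lattice whose height is at most $f$ (there are at most $f$ crashes), $f$ is a valid upper bound on the height of that input sublattice, so by Theorem~\ref{theo:comp_alpha} the call $LA_{\alpha}(f,\cdot)$ terminates in $\log f$ rounds; adding Phase A gives $\log f + 1$ rounds (with $\lceil\cdot\rceil$ where needed). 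Fault tolerance $f<n$ is inherited from $LA_{\alpha}$, and Phase A works for any number of crashes.

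For Downward-Validity, I would fix a correct process $i$. The crux is: since $i$ is correct it sends $V_i=\{x_i\}$ to everyone in Phase A, and because channels are reliable every process still running during Phase A receives this message, so no process ever adds $i$ to its failure set; that is, $i\notin F_k$ for every $k$. These sets $F_k$ are precisely the inputs to $LA_{\alpha}$ in Phase B, and in the boolean lattice of failure sets their join is their union, which excludes $i$. By \emph{Upward-Validity of $LA_{\alpha}$}, $F_i'\leq\sqcup\{F_k\}$, hence $i\notin F_i'$. Therefore the value $x_i$ that $i$ recorded from itself in Phase A is not deleted when forming $U_i$, so $x_i\in C_i$ and $x_i\leq\sqcup\{v:v\in C_i\}=y_i$.

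Upward-Validity is routine: every value ever placed in $V_i$ during Phase A is one of the proposed inputs $x_1,\dots,x_n$, so $C_i\subseteq\{x_1,\dots,x_n\}$ and thus $y_i=\sqcup C_i\leq\sqcup\{x_1,\dots,x_n\}$. Comparability is Lemma~\ref{lem:comp_beta}: by Comparability of $LA_{\alpha}$ the decided failure sets are comparable, say $F_i'\subseteq F_j'$, so the removed-value sets satisfy $U_i\subseteq U_j$; combined with the fact that a correct process's Phase-A value reaches every still-running process, one concludes that the surviving sets $C_i,C_j$ are comparable (one contains the other), whence $y_i=\sqcup C_i$ and $y_j=\sqcup C_j$ are comparable.

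The delicate point lives in Lemma~\ref{lem:comp_beta} rather than in the theorem itself: $V_i$ and $V_j$ need not be comparable, since two processes may witness different crashes, so one must argue that their symmetric difference consists only of values of processes that one side already records as failed, so that subtracting the comparable failure sets $F_i',F_j'$ realigns the two surviving sets. Within the theorem proof proper, the only subtlety to flag is that Downward-Validity must invoke \emph{Upward-Validity} of $LA_{\alpha}$ (not Downward-Validity) to conclude $i\notin F_i'$: $F_i$ already excludes $i$, but a priori $LA_{\alpha}$ could enlarge its output, and only the upper bound $F_i'\leq\sqcup\{F_k\}$ rules that out.
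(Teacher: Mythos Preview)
Your proposal is correct and follows essentially the same route as the paper: use Upward-Validity of $LA_{\alpha}$ to show $i\notin F_i'$ for Downward-Validity, observe that $C_i\subseteq\{x_1,\dots,x_n\}$ for Upward-Validity, and defer Comparability to Lemma~\ref{lem:comp_beta}. Your write-up is in fact more explicit than the paper's on two counts: you spell out the round-count argument (one round for Phase~A plus $\log f$ from Theorem~\ref{theo:comp_alpha}), and you flag precisely why it is \emph{Upward}-Validity of $LA_{\alpha}$, not Downward-Validity, that is needed here.
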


\subsubsection{Algorithm $LA_{\gamma}$}
Algorithm $LA_{\beta}$ solves lattice agreement in $\log f + 1$ rounds whereas Algorithm $LA_{\alpha}$ solves lattice agreement in $\log h(L)$ rounds assuming $h(L)$ is given. We now propose an algorithm to solve lattice agreement which has round complexity related to $h(L)$ even when $h(L)$ is not known. This algorithm called $LA_{\gamma}$ (shown in Figure \ref{fig:la_gamma}), solves the standard lattice agreement in $O(min\{\log^2 h(L), \log^2 f\})$ rounds. The basic idea is to \quotes{guess} the height of $L$ and apply algorithm $LA_{\alpha}$ using the guessed height as input. The algorithm is composed of two phases. At \textit{Phase A}, each process simply broadcasts its value and takes the join of all received values. \textit{Phase B} is the guessing phase which invokes algorithm $LA_{\alpha}$ repeatedly. Notice that \textit{decided} variable is updated at \textit{line 6} of $LA_{\alpha}$.

Let $w_i$ denote the value of $v_i$ after \textit{Phase A}. Let $\Psi$ denote the sublattice formed by values of all correct processes after \textit{Phase A}, i.e, $\Psi = \{u ~| ~(u \in L) \wedge (\exists i: w_i \leq u)\}$. Since there are at most $f$ failures, we have $h(\Psi) \leq f$. Now we show that \textit{Phase B} terminates in at most $\lceil \log h(\Psi) \rceil$ executions of $LA_{\alpha}$. We call the $i$-th execution of $LA_{\alpha}$ as iteration $i$. Notice that the guessed height of iteration $i$ is $2^i$.

\begin{lemma}\label{lem:decide_gamma}
After iteration $\lceil \log h(\Psi) \rceil$ of $LA_{\alpha} $ at \textit{Phase B}, all processes decide.
\begin{proof}
Since $2^{\lceil \log h(\Psi) \rceil} \geq h(\Psi)$, Lemma \ref{lem:dec} still holds which implies Lemma \ref{lem:term}. Thus, all undecided processes have equal values at the last round of iteration $\lceil \log h(\Psi) \rceil$.  Therefore, all undecided processes decide after iteration $\lceil \log h(\Psi) \rceil$.
\end{proof}
\end{lemma}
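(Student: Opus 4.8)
The plan is to re-run the termination analysis of $LA_\alpha$ --- Lemma \ref{lem:dec} and Lemma \ref{lem:term} --- on iteration $m := \lceil \log h(\Psi)\rceil$ of \textit{Phase B}, using the guessed height $H' := 2^{m}$ in place of $h(L)$. The defining property of the exponent is precisely that $H' \ge h(\Psi)$, so $H'$ over-estimates the height of the lattice on which \textit{Phase B} is really operating, which is exactly the hypothesis $LA_\alpha$'s analysis needs.

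First I would establish the structural fact that everything \textit{Phase B} manipulates stays inside $\Psi$. After \textit{Phase A} each correct process $p_i$ holds $w_i \in \Psi$ (since $w_i \le w_i$); and within any invocation of $LA_\alpha$ a process either keeps its current value or overwrites it with the join of the values received in that round, all of which --- by induction, together with the observation that any process still alive in a given round has received every correct process's \textit{Phase A} message --- already lie in $\Psi$. Since $\Psi$ is closed under join, the updated value is again in $\Psi$, and the outputs of one iteration are the inputs of the next. Hence throughout iteration $m$ the values being classified all lie in $\Psi$, a join-closed sublattice of height at most $h(\Psi) \le f$.

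Next I would invoke the analysis of $LA_\alpha$ with $\Psi$ playing the role of the input sublattice and $H'$ the role of its height. The key point is that Lemma \ref{lem:cls}, and therefore Lemma \ref{lem:dec} and Lemma \ref{lem:term}, only refer to heights within, and the join structure of, the sublattice generated by the values actually fed to $LA_\alpha$; nothing about $L$ is used beyond the fact that the first argument of $LA_\alpha$ bounds those heights. Because $H' \ge h(\Psi)$, that hypothesis holds for iteration $m$, so Lemma \ref{lem:dec} goes through with its parameter $H$ instantiated to $H'$, and hence Lemma \ref{lem:term} applies: any two processes still in the same group at the beginning of the last round $r = \log H' + 1$ of iteration $m$ hold equal values. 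But then in the \textit{Classifier} call of that round the set $U$ consists only of values equal to the caller's own value, so the early-termination test on \textit{line 5} of the \textit{Classifier} succeeds; every process that had not yet decided sets $decided$ to $true$ and returns. Consequently all processes have decided by the end of iteration $m = \lceil \log h(\Psi)\rceil$.

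I expect the main obstacle to be the third step --- making rigorous that the height-based correctness proof of $LA_\alpha$ may be re-run verbatim with $\Psi$ substituted for $L$ and $H'$ for $h(L)$. One must check that the \textit{Classifier}'s comparison $h(w) > k$ and the inductive bookkeeping of the interval $(k - H/2^r,\, k + H/2^r]$ in Lemma \ref{lem:dec} are insensitive to replacing $L$ by the (possibly much shorter, but ``higher up'') sublattice $\Psi$, and one must also supply the bound $h(\Psi) \le f$: it holds because every correct process hears from every other correct process in \textit{Phase A}, so each minimal element of $\Psi$ already dominates the join of at least $n-f$ of the inputs, leaving room for at most $f$ further steps up to the top of $\Psi$. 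The remaining ingredients --- the induction keeping all \textit{Phase B} values inside $\Psi$, and reading off the early-termination condition from Lemma \ref{lem:term} --- are routine.
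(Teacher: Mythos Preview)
Your proposal follows exactly the paper's approach: since $2^{\lceil \log h(\Psi)\rceil}\ge h(\Psi)$, re-run the analysis of $LA_\alpha$ (Lemmas~\ref{lem:dec} and~\ref{lem:term}) with $\Psi$ in place of $L$ and the guessed height $H'$ in place of $H$, so that at the final round of this iteration processes within the same group hold equal values and therefore trigger the early-termination branch of the \textit{Classifier}. Your write-up is in fact more explicit than the paper's two-line proof --- you supply the closure argument that all Phase~B values remain in $\Psi$, and you correctly flag the one place where care is needed (that the \textit{Classifier}'s height test and the interval bookkeeping of Lemma~\ref{lem:dec} transfer when $L$ is replaced by the ``higher up'' sublattice $\Psi$) --- but the underlying argument is identical.
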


We now show that two processes decide on comparable values irrespective of whether they both decide on the same iteration of $LA_{\alpha}$.

\begin{lemma}\label{lem:comp_gamma}
($Comparability$) Let $i$ and $j$ be any two processes that decide on value $y_i$ and $y_j$, respectively. Then $y_i$ and $y_j$ are comparable. 
\begin{proof}
Assume process $i$ decides on $G_i$ at round $r_i$ of execution $e_i$ and process $j$ decides on $G_j$ at round $r_j$ of execution $e_j$. If $e_i = e_j$, then $y_i$ and $y_j$ are comparable by Lemma \ref{lem:comp_alpha}. Otherwise, $e_i \neq e_j$. Without loss of generality, suppose $e_i < e_j$. Consider round $r_i$ of execution $e_i$. Since $i$ decides on value $y_i$ at this round, then from Lemma \ref{lem:decideComp}, we have that $y_i$ is comparable with $v_k^r$ for any correct process $k$. Let $V = \{v_k^r ~|~ k ~is ~correct\}$. Then, $y_j$ is at most the join of a subset of $V$. From Lemma \ref{lem:lat}, it follows that $y_i$ is comparable with $y_j$.  
\end{proof}
\end{lemma}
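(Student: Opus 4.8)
The plan is to avoid analysing each iteration of $LA_{\alpha}$ in isolation and instead carry a single invariant through the whole of Phase~B, from which Comparability follows with essentially no further work; this invariant packages together what Lemmas~\ref{lem:comp_alpha}, \ref{lem:decideComp} and \ref{lem:lat} say, lifted across iteration boundaries. Call a process \emph{active} while it is still running; in $LA_{\gamma}$ a process stops exactly when the current invocation of $LA_{\alpha}$ returns with its decision flag set, so the active set only shrinks over time, while the set of decided values (one per process that has stopped) only grows.

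The invariant I would maintain is: \emph{at every point of the execution, (i) the values decided by any two stopped processes are comparable, and (ii) every decided value $y_q$ is comparable with the value currently held by every active process}. It holds vacuously just after Phase~A, when no process has decided. The two ingredients used to preserve it are Lemma~\ref{lem:decideComp}, to handle processes in a different group from a newly deciding process, and the elementary fact --- a special case of Lemma~\ref{lem:lat} --- that if each of two lattice elements is comparable with a fixed element $y$ then so is their join (if both are below $y$ the join is below $y$; otherwise the join is above $y$).

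Preservation is checked step by step against the three outcomes a Classifier call can have for an active process $p$. If $p$ is classified \emph{slave} its value is unchanged, so (ii) is untouched. If $p$ is classified \emph{master} its new value is the join of the values currently held by the processes in its group; each of those, by (ii), is comparable with every decided $y_q$, hence so is their join. If $p$ \emph{decides}, it decides on its current value $y_p$: clause (i) for the new decided value holds because $y_p$ was already comparable with all earlier decided values by (ii); for clause (ii) we need $y_p$ comparable with every other active process's current (and subsequent) value --- for a group-mate this is exactly the early-termination test (line~5 of the Classifier) together with the join fact, and for a process in a different group it is Lemma~\ref{lem:decideComp}, after which comparability is carried forward by the slave and master cases just treated. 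Finally, the invariant crosses the boundary between iteration $e$ and iteration $e+1$ of $LA_{\alpha}$ untouched, because $LA_{\gamma}$ merely feeds each still-active process's iteration-$e$ output value as its iteration-$(e+1)$ input and leaves the decided set alone.

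With the invariant in hand the lemma is immediate: when process $i$ stops it decides on its then-current value $y_i$, which by (ii) is comparable with every value decided earlier, in the same or an earlier iteration; and when a later-stopping process $j$ decides on $y_j$, clause (i) applied at that moment makes $y_i$ and $y_j$ comparable, the remaining possibility that $i$ and $j$ stop at the same point being covered by (ii) read in either direction. That all processes eventually stop, which the statement tacitly needs, is Lemma~\ref{lem:decide_gamma}. The step I expect to be the real obstacle is the ``different group'' case of preservation: one must be certain that Lemma~\ref{lem:decideComp} --- stated for a single run of $LA_{\alpha}$ --- still applies to the run performed in iteration $e_i$ even though the guessed height $2^{e_i}$ may be well below $h(\Psi)$, and that its conclusion extends to active processes that subsequently crash. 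Both hold because the proof of Lemma~\ref{lem:decideComp} (through property (p5) of Lemma~\ref{lem:cls} and the tree structure of groups) never uses the numerical value of the height parameter, nor the eventual fate of the process it is compared against --- but this is the point that most needs to be spelled out.
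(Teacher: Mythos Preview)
Your invariant-based argument is correct and is essentially the paper's proof unfolded: the paper argues directly that when the earlier-deciding process $i$ stops, Lemma~\ref{lem:decideComp} makes $y_i$ comparable with every still-active value at that instant, then observes in one line that any later decision $y_j$ is a join of a subset of those values, so Lemma~\ref{lem:lat} finishes. Your clause~(ii) is exactly this fact carried forward step by step through slave/master/decide transitions and across iteration boundaries, and your closing remark---that Lemma~\ref{lem:decideComp} never uses the numerical height parameter---is precisely the justification the paper leaves implicit when invoking that lemma inside an iteration whose guess may be too small.
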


\begin{theorem}\label{theo:comp_gamma}
$LA_{\gamma}$ solves the lattice agreement problem and can tolerate $f < n$ failures.
\begin{proof} 
\textit{Downward-Validity} follows from that fact that the value of each process is non-decreasing along the execution. \textit{Upward-Validity} follows since each process can receive at most all values from all processes. \textit{Comparability} holds by Lemma \ref{lem:comp_gamma}. 
\end{proof}
\end{theorem}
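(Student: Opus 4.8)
The plan is to verify the four ingredients that make up ``solving lattice agreement'' --- termination, \textbf{Downward-Validity}, \textbf{Upward-Validity}, and \textbf{Comparability} --- reusing the properties already proved for $LA_{\alpha}$ (Theorem~\ref{theo:comp_alpha}) together with Lemmas~\ref{lem:decide_gamma} and~\ref{lem:comp_gamma}. Termination is the quickest: by Lemma~\ref{lem:decide_gamma} every undecided process decides no later than iteration $\lceil \log h(\Psi)\rceil$ of the \textit{while} loop, and since $h(\Psi) \le f < n$ this bound is finite, so \textit{Phase B}, hence $LA_{\gamma}$, always halts; because iteration $i$ runs $LA_{\alpha}$ with guessed height $2^i$ and therefore costs $\log 2^i = i$ rounds, summing over $i$ up to $\lceil \log h(\Psi)\rceil$ gives the $\min\{O(\log^2 h(L)), O(\log^2 f)\}$ round bound, and the same building blocks only ever require $f < n$, giving the fault tolerance claim.

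For \textbf{Downward-Validity} I would observe that the working value $v_i$ of a correct process never decreases: in \textit{Phase A} it is updated only by joins, and in \textit{Phase B} each call $v_i := LA_{\alpha}(guess, v_i)$ returns a value $\ge$ its input by the Downward-Validity of $LA_{\alpha}$; chaining these inequalities yields $x_i \le y_i$. For \textbf{Upward-Validity} I would argue by induction over the iterations of \textit{Phase B} that every process's value stays $\le \sqcup\{x_1,\dots,x_n\}$: after \textit{Phase A} this holds because each $v_i$ is a join of a subset of the input values, and for the inductive step the values fed into iteration $k$ are all $\le \sqcup\{x_1,\dots,x_n\}$, so by the Upward-Validity of $LA_{\alpha}$ the outputs are bounded by the join of those inputs, hence again by $\sqcup\{x_1,\dots,x_n\}$; therefore $y_i \le \sqcup\{x_1,\dots,x_n\}$. \textbf{Comparability} is exactly Lemma~\ref{lem:comp_gamma}.

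The one place that needs care --- and the reason Lemmas~\ref{lem:decide_gamma} and~\ref{lem:comp_gamma} were isolated --- is checking that invoking $LA_{\alpha}$ with a guessed height $2^k$ that may be \emph{smaller} than the true height of the current sublattice is still safe. I would spell out that validity in both directions is unaffected by an under-guess, since the \textit{Classifier} only ever keeps a value or replaces it by a join of received values; that comparability \emph{within} a single execution (Lemma~\ref{lem:comp_alpha}) likewise does not use correctness of the height, as its proof rests only on Lemmas~\ref{lem:decideComp} and~\ref{lem:lat}; and that comparability \emph{across} different executions is precisely what Lemma~\ref{lem:comp_gamma} supplies. The height matters only for \emph{forcing} termination, which is why Lemma~\ref{lem:decide_gamma} waits for the guess to exceed $h(\Psi)$. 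So the main obstacle is not a hard argument but disciplined bookkeeping: making explicit that each of the three safety properties is robust to an under-guess, while the termination bound needs only an eventual over-guess.
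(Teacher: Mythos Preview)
Your proposal is correct and follows essentially the same approach as the paper: monotonicity of $v_i$ for Downward-Validity, the fact that every value is a join of a subset of the inputs for Upward-Validity, and Lemma~\ref{lem:comp_gamma} for Comparability. Your additional remarks on termination (via Lemma~\ref{lem:decide_gamma}) and on why an under-guessed height does not threaten the safety properties are useful elaborations that the paper leaves implicit, but they do not change the underlying argument.
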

{\bf Complexity}. From Lemma \ref{lem:decide_gamma}, we know \textit{Phase B} terminates in at most $\lceil \log h(\Psi) \rceil$ executions of $LA_{\alpha}$. Thus, \textit{Phase B} takes $\log 2 + \log 4 + ... + \lceil \log h(\Psi) \rceil = \frac{(\lceil \log h(\Psi) \rceil + 1)*( \lceil \log h(\Psi) \rceil)}{2}$ rounds in worst case. Since $h(\Psi) \leq f$ and $h(\Psi) \leq h(L)$, $LA_{\gamma}$ has round complexity of $\min \{O(\log^2 h(L)), O(\log^2 f)\}$. Each process sends $n$ messages at each round, thus message complexity is $n^2 \cdot \min \{O(\log^2 h(L)), O(\log^2 f)\}$.

\section{Lattice Agreement in Asynchronous Systems}
In this section, we discuss the lattice agreement problem in asynchronous systems. The algorithm proposed in \cite{faleiro2012generalized} requires $O(n)$ units of time, whereas our algorithm ($LA_{\delta}$ shown in Figure \ref{fig:la_delta}) requires only $O(f)$ units of time. We first note that

\begin{theorem}\label{theo:impossibility}
The lattice agreement problem cannot be solved in asynchronous message systems if $f \geq \frac{n}{2}$. 
\end{theorem}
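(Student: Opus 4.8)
The plan is a standard partition (indistinguishability) argument, analogous to the proof that a shared register cannot be emulated in an asynchronous message passing system when $f \ge n/2$. First I would record two easy observations: an algorithm \emph{solving} lattice agreement must in particular guarantee that every correct process eventually decides (and decisions are irrevocable), and the claim is that no algorithm works for every finite join semi-lattice, so it suffices to exhibit one lattice on which any candidate algorithm fails. I would take the three-element lattice $\{a, b, a \sqcup b\}$ with $a$ and $b$ incomparable. Since $f \ge n/2$ and $f$ is an integer, $f \ge \lceil n/2 \rceil$; together with $f < n$ this forces $n \ge 2$, so I can partition $p_1, \dots, p_n$ into two nonempty sets $A$ and $B$ with $|A|, |B| \in \{\lfloor n/2 \rfloor, \lceil n/2 \rceil\}$, hence $|A| \le f$ and $|B| \le f$.

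Next I would build three executions of the supposed algorithm. In execution $\alpha$, every process proposes $a$, every process in $B$ crashes before taking any step, and messages internal to $A$ are delivered on some fixed schedule. Since $|B| \le f$ this is a legal run, so by termination every $p \in A$ decides some $y_p$; Downward-Validity gives $y_p \ge a$ and, as every input equals $a$, Upward-Validity forces $y_p = a$. Let $t_A$ be a time by which all of $A$ has decided in $\alpha$. Execution $\beta$ is the mirror image: every process proposes $b$, every process in $A$ crashes at the start, and I obtain that every $q \in B$ decides $y_q = b$, by some time $t_B$.

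Finally, in execution $\gamma$ every process in $A$ proposes $a$, every process in $B$ proposes $b$, no process crashes, messages internal to $A$ follow the same schedule as in $\alpha$ and messages internal to $B$ follow the same schedule as in $\beta$, but every message between $A$ and $B$ is delayed past $\max(t_A, t_B)$ — admissible because the system is asynchronous. Up to time $t_A$ the processes in $A$ receive exactly the same messages in $\gamma$ as in $\alpha$ (none from $B$ in either case, and identical messages within $A$), so $\gamma$ is indistinguishable from $\alpha$ for them and each $p \in A$ decides $a$; symmetrically each $q \in B$ decides $b$ in $\gamma$. The cross-messages that later arrive cannot change these irrevocable decisions. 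But $a \not\le b$ and $b \not\le a$, contradicting Comparability, so no such algorithm exists.

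The step that needs care is the indistinguishability claim: I must justify that changing the inputs of the $B$-processes (from $a$ in $\alpha$ to $b$ in $\gamma$) does not affect the view of the $A$-processes. This holds because a process's behaviour depends only on its own input and the messages it receives, and $A$ receives nothing from $B$ in either execution before its decision time; the symmetric statement handles $B$. The only other thing to verify is that the partition with both sides of size at most $f$ really exists for every $n$ to which the theorem applies, which is the $\lceil n/2 \rceil \le f$ computation noted above.
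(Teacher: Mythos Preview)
Your proof is correct and is exactly the standard partition argument the paper invokes; the paper's own proof is a one-line sketch (``If two partitions have incomparable values then they can never decide on comparable values''), and you have simply filled in the details carefully.
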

\begin{proof}
The proof follows from the standard partition argument. If two partitions have incomparable values then they can never decide on comparable values.
\end{proof}

\subsection{Algorithm $LA_{\delta}$}
On account of Theorem \ref{theo:impossibility}, we assume that $f < \frac{n}{2}$.
The algorithm proceeds in {\em round-trips}. A single round-trip is composed of sending messages to all and getting $n - f$ acknowledgement messages back. At each round-trip, a process sends a \textit{prop} message to all, with its current accepted value as the proposal value, and waits for $n - f$ $ACK$ messages. If majority of these $ACK$ messages are \textit{accept}, then it decides on its current proposed value. Otherwise, it updates its current accept value to be the join of all values received and start next round-trip. Whenever a process receives a proposal, i.e, a \textit{prop} message, if the proposal has a value at least as big as its current value, then it sends back an $ACK$ message with \textit{accept} and updates its current accept value to be the received proposal value. Otherwise, it sends back an $ACK$ message with \textit{reject}. 

\begin{figure}[htb] 
\begin{centering}
\fbox{\begin{minipage}  {2.4in}
\underline{$\bm{LA_{\delta}}$ for $p_i$}\\
\h \textit{acceptVal} := $x_i$// accept value\\
\h $learnedVal := \bot$ // learned value\\
\h \\
 {\bf on receiving} \textit{prop}$(v_j, r)$ from $p_j$:\\
\h {\bf if} {$v_j \geq $ \textit{acceptVal}}\\
\h\h	Send \textit{ACK}(\textit{\quotes{accept},} $-, r$) \\
\h\h \textit{acceptVal} := $v_j$ \\
\h {\bf else}  \\
\h\h Send \textit{ACK}(\textit{\quotes{reject},} \textit{acceptVal}, $r$)\\
\\
\end{minipage}
\hspace{0.2in}
\begin{minipage}  {2.7in}
{\bf for $r := 1$ to $f + 1$} \\
\h \textit{val} := \textit{acceptVal}\\
\h  Send \textit{prop}(\textit{val}, $r$) to all\\
\h  {\bf wait for} $n - f$ \textit{ACK}($-, -, r$) messages \\
\h let $V_r$ be values contained in \textit{reject ACKs} \\
\h let \textit{tally} be number of \textit{accept ACKs} \\
\h {\bf if} {\textit{tally} $> \frac{n}{2}$} \\
\h\h  $learnedVal := val$ \\
\h\h {\bf break} \\
\h {\bf else}\\
\h\h \textit{acceptVal} := \textit{acceptVal} $\sqcup \{v ~| ~v \in V_r\}$ \\
{\bf end for}
\end{minipage}
} 
\end{centering}
\caption{Algorithm $LA_{\delta}$ \label{fig:la_delta}}
\vspace*{-0.2in}
\end{figure}

Let $acceptVal_i^{r}$ denote the accept value (variable \textit{acceptVal}) held by $p_i$ at the beginning of round-trip $r$. Let $L^{(r)} = \{u~|~(u \in L) \wedge (\exists i: acceptVal_i^{r} \leq u)\}$, i.e, $L^{(r)}$ denotes the join-closed subset of $L$ that includes the accept values held by all undecided processes at the beginning of the round-trip $r$. Notice that $L^{(1)} = L$.
\begin{lemma}\label{lem:height_decrease}
 For any round-trip $r$, $h(L^{(r + 1)}) <  h(L^{(r)})$. 
\begin{proof}
If a process decides at round-trip $r$, its value is not in $L^{(r + 1)}$. So, we only need to prove that $h(acceptVal_i^r) < h(acceptVal_i^{r + 1})$ for any process $i$ which does not decide at round-trip $r$. The fact that process $i$ does not decide at round-trip $r$ implies that $i$ must have received at least one \textit{reject} \textit{ACK} with a greater value. Since $acceptVal_i^{r + 1}$ is the join of all values received at round-trip $r$, $acceptVal_i^{r} < acceptVal_i^{r + 1}$. Hence, $h(acceptVal_i^r) < h(acceptVal_i^{r + 1})$ for any undecided process $i$. Therefore, $h(L^{(r)}) < h(L^{(r + 1)})$. 
\end{proof}
\end{lemma}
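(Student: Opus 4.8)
The plan is to follow the decomposition hinted at by the statement: processes that decide in round-trip $r$ contribute nothing to $L^{(r+1)}$, so it suffices to show that every process still running strictly raises its accept value from round-trip $r$ to round-trip $r+1$, and then to convert this per-process strict increase into a drop in the height of the generated sublattice.

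First I would fix notation. Let $U_r$ be the set of processes that have not yet decided when they begin round-trip $r$, so that $L^{(r)}$ is the join-closure of $\{acceptVal_i^{r} : i \in U_r\}$; since a process that decides in round-trip $r$ executes {\bf break} and never reaches round-trip $r+1$, we have $U_{r+1} \subseteq U_r$, and in particular the generating set of $L^{(r+1)}$ is indexed by a subset of that of $L^{(r)}$. The heart of the argument is the claim that for every $i \in U_{r+1}$ we have $acceptVal_i^{r+1} > acceptVal_i^{r}$ strictly. Because $i$ does not decide in round-trip $r$, at most $\lfloor n/2 \rfloor$ of the $n-f$ $ACK$s it collects carry \textit{accept}; as $f < n/2$ this leaves at least one \textit{reject} $ACK$. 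Such an $ACK$ is returned by a process $p_j$ exactly when the proposal value $val = acceptVal_i^{r}$ fails to dominate $p_j$'s current accept value $w$, i.e.\ $w \not\leq acceptVal_i^{r}$, and $w$ is precisely the value carried back in that $ACK$ and hence joined into $acceptVal_i^{r+1}$ on the last line of the loop body. Consequently $acceptVal_i^{r+1} \geq acceptVal_i^{r} \sqcup w > acceptVal_i^{r}$. In a finite lattice $u < v$ implies $h(u) < h(v)$ (extend a longest minimal-to-$u$ chain by $v$), so $h(acceptVal_i^{r+1}) \geq h(acceptVal_i^{r}) + 1$.

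To finish I would lift this per-process progress to the whole sublattice. Take a longest chain $c_0 < c_1 < \dots < c_m$ of $L^{(r+1)}$, so $m = h(L^{(r+1)})$; we may assume $c_0$ is minimal in $L^{(r+1)}$, hence $c_0 = acceptVal_k^{r+1}$ for some $k \in U_{r+1}$ (a minimal element of a join-closure is one of its minimal generators). Using $acceptVal_k^{r} < acceptVal_k^{r+1} = c_0$, together with the fact that $acceptVal_k^{r}$ and this entire chain lie inside $L^{(r)}$, the chain $acceptVal_k^{r} < c_0 < \dots < c_m$ of $L^{(r)}$ has length at least $m+1$, so $h(L^{(r)}) \geq m+1 > m = h(L^{(r+1)})$. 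The step that needs the most care — and the one I expect to be the real obstacle — is justifying that $L^{(r+1)} \subseteq L^{(r)}$, i.e.\ that the reject values an undecided process adopts during round-trip $r$ are themselves elements of $L^{(r)}$. This is exactly where the informal phrase ``accept values held at the beginning of round-trip $r$'' must be made precise for an asynchronous execution: one has to track that accept values are monotone nondecreasing and are only ever adopted from other processes' proposals, so that no value outside the sublattice generated by the round-$r$ accept values can enter the system before round-trip $r+1$ begins. Everything else in the proof is routine.
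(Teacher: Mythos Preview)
Your approach is essentially the paper's: show that every process still undecided after round-trip $r$ has strictly increased its accept value, and conclude that the height drops. The paper's own proof is terser than yours---it stops at ``$h(acceptVal_i^r) < h(acceptVal_i^{r+1})$ for any undecided process $i$'' and jumps directly to the conclusion without spelling out the chain-extension step you give.

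One clarification resolves the obstacle you flag. You read $L^{(r)}$ as the join-closure of $\{acceptVal_i^{r} : i \in U_r\}$, but the paper's formal definition is $L^{(r)} = \{u \in L : \exists i,\ acceptVal_i^{r} \leq u\}$, i.e.\ the \emph{up-set} in $L$ generated by the round-$r$ accept values. With that definition, the containment $L^{(r+1)} \subseteq L^{(r)}$ is immediate from $U_{r+1} \subseteq U_r$ together with $acceptVal_i^{r+1} \geq acceptVal_i^{r}$ for each $i \in U_{r+1}$: if $u \geq acceptVal_i^{r+1}$ then a fortiori $u \geq acceptVal_i^{r}$. So the ``real obstacle'' you worry about---tracking where reject values come from in the asynchronous execution---is not needed for this lemma; it suffices that accept values are monotone in $r$ within each process. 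The rest of your argument (minimal elements of $L^{(r+1)}$ are among the $acceptVal_k^{r+1}$, then prepend $acceptVal_k^{r}$ to a longest chain) goes through unchanged under the up-set reading and in fact supplies the justification the paper omits.
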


\begin{lemma}\label{lem:decide_delta}
All process decide within $\min \{h(L), f + 1\}$ asynchronous round-trips.
\begin{proof}
We first show that $h(L^{(2)}) \leq f$. At the first round-trip, each process receives $n - f$ $ACK$s, which is equivalent to receiving $n - f$ values. Therefore, $h(L^{(2)}) \leq f$. Let $r_{min} = \min \{h(L), f + 1\}$. Combining the fact that $h(L^{(2)}) \leq f$ with Lemma \ref{lem:height_decrease}, we have $h(L^{(r_{min})}) \leq 1$. This means that undecided correct processes have the same value. Thus, all of them receive $n - f$ $ACK$ messages with \textit{accept} and decide. Therefore, all processes decide within $min\{h(L), f + 1\}$ round-trips.
\end{proof}
\end{lemma}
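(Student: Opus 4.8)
The plan is to follow the structure already visible in Lemma~\ref{lem:height_decrease} and the definition of $L^{(r)}$: bound $h(L^{(2)})$ by $f$, then use the strict per-round-trip height decrease of Lemma~\ref{lem:height_decrease} to force $h(L^{(r_{\min})}) \le 1$ for $r_{\min} := \min\{h(L), f+1\}$, and finally argue that height at most one pins all undecided correct processes to a common accept value, so that each of them collects $n-f > n/2$ \textit{accept} acknowledgements in round-trip $r_{\min}$ and decides.

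For the first step I would make precise the informal ``receives $n-f$ values'': an undecided process $p_i$ collects $n-f$ acknowledgements in round-trip $1$; a \textit{reject} reply from $p_j$ delivers a value $\ge x_j$, while an \textit{accept} reply from $p_j$ certifies that $p_j$'s round-$1$ accept value, hence $x_j$, was $\le x_i$. Joining everything, $acceptVal_i^{(2)}$ dominates a join of at least $n-f$ of the $n$ input values, so inside the boolean lattice generated by the inputs (of height at most $n$) it lies at height at least $n-f$; since the minimal elements of $L^{(2)}$ are among the values $acceptVal_i^{(2)}$ and the top of $L^{(2)}$ is bounded by the join of all inputs, every chain in $L^{(2)}$ has length at most $f$, i.e.\ $h(L^{(2)}) \le f$. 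Then, since $h(L^{(1)}) = h(L)$ and $h(L^{(r+1)}) < h(L^{(r)})$ by Lemma~\ref{lem:height_decrease}, I get $h(L^{(r)}) \le h(L) - (r-1)$ and $h(L^{(r)}) \le f - (r-2)$ for $r \ge 2$; taking $r = h(L)$ in the first bound when $h(L) \le f+1$, and $r = f+1$ in the second bound otherwise, yields $h(L^{(r_{\min})}) \le 1$ in either case.

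The last step, $h(L^{(r_{\min})}) \le 1 \Rightarrow$ all undecided correct processes hold a common value $\Rightarrow$ everyone decides at round-trip $r_{\min}$, is the one I expect to be the real obstacle and would treat most carefully. The clean sub-case is $h(L^{(r_{\min})}) = 0$: $L^{(r_{\min})}$ is a singleton, every undecided correct process proposes the same value $v^*$, every correct recipient finds $v^* \ge acceptVal$ and replies \textit{accept} without changing its value, so each collects $n-f > n/2$ \textit{accept}s and decides. The subtle sub-case is $h(L^{(r_{\min})}) = 1$: here $L^{(r_{\min})}$ could in principle be $\{u, v, u\sqcup v\}$ with $u, v$ incomparable, so two undecided processes need not start round-trip $r_{\min}$ with comparable values, and I would need to rule this out --- either by strengthening the invariant maintained across round-trips so that it tracks not just $h(L^{(r)})$ but enough of the structure of $L^{(r)}$ to show the undecided accept values are always pairwise comparable (so that a single internal join during round-trip $r_{\min}$ collapses them to the maximum), or by sharpening the earlier steps to get $h(L^{(r_{\min})}) = 0$ outright. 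Even granting that the undecided values are comparable, the delicate part is the asynchrony: I must ensure that in round-trip $r_{\min}$ no straggling low-valued \textit{prop} message and no stale or faulty-process \textit{reject} keeps a correct process's \textit{accept} tally at or below $n/2$, which is exactly where the $f < n/2$ hypothesis has to be used.
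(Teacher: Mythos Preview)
Your plan is exactly the paper's argument: bound $h(L^{(2)})\le f$ from the $n-f$ acknowledgements in round-trip~1, invoke Lemma~\ref{lem:height_decrease} repeatedly to get $h(L^{(r_{\min})})\le 1$, and conclude that all undecided correct processes share a value and therefore decide. The paper's proof is terser than yours but follows the same three steps in the same order.

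On the point you single out as ``the real obstacle'': the paper does not treat it more carefully than you do. It simply writes ``$h(L^{(r_{\min})})\le 1$. This means that undecided correct processes have the same value,'' and moves on --- no separate handling of the $h=1$ sub-case, no argument ruling out two incomparable minimal accept values whose join sits one step above them, and no discussion of straggling \textit{prop}/\textit{reject} messages in the final round-trip. So the concern you raise is legitimate, but you will not find its resolution in the paper's own proof; the published argument glosses over precisely the spot you flagged. If you want a fully rigorous version you will indeed have to either sharpen the height bookkeeping to reach $h(L^{(r_{\min})})=0$ (at the cost of a possible off-by-one in the round count) or add the comparability invariant you suggest.
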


We note here that the algorithm in \cite{faleiro2012generalized} takes
$O(n)$ message delays for a value to be learned in the worst case. A crucial difference between $LA_{\delta}$ and the algorithm in \cite{faleiro2012generalized} is that $LA_{\delta}$ starts with the accepted value as the input value. Hence, after the first round-trip, there is a significant reduction in the height of the sublattice, from $n$ initially (in the worst case) to $f$. In \cite{faleiro2012generalized}, acceptors start with the accepted value as null. Hence, there is reduction of height by only $1$ in the worst case. Since in their algorithm, acceptors are different from proposers (in the style of Paxos), acceptors do not have access to the proposed values.

\begin{theorem}\label{theo:comp_delta}
Algorithm $LA_{\delta}$ solves the lattice agreement problem in $\min \{h(L), f + 1\}$ round-trips.
\begin{proof} 
\textit{Down-Validity} holds since the accept value is non-decreasing for any process $i$. \textit{Upward-Validity} follows because each learned value must be the join of a subset of all initial values which is at most $\sqcup \{x_1,...,x_n\}$. For \textit{Comparability}, suppose process $i$ and $j$ decide on values $y_i$ and $y_j$. There must be at least one process that has accepted both $y_i$ and $y_j$. Since each process can only {\em accept} comparable values. Thus, we have either $y_i \leq y_j$ or $y_j \leq y_i$.
\end{proof}
\end{theorem}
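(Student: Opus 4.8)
The plan is to verify the three lattice agreement properties — Downward-Validity, Upward-Validity, and Comparability — for $LA_{\delta}$, and to read off the round-trip bound directly from Lemma~\ref{lem:decide_delta}. Downward-Validity is the easiest: I would observe that every process initializes \textit{acceptVal} to its own input $x_i$, and that \textit{acceptVal} is only ever reassigned to either a received proposal value $v_j$ (on the accept branch of the message handler) or to a join of \textit{acceptVal} with a set of received values. In both cases the new value is $\geq$ the old value, so \textit{acceptVal} is monotonically non-decreasing; since the learned value $learnedVal$ is set to the current \textit{val} $=$ \textit{acceptVal}, we get $x_i \leq y_i$.

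For Upward-Validity, I would argue that every value ever held in any \textit{acceptVal} variable is the join of some subset of the initial input values $\{x_1,\dots,x_n\}$. This follows by induction on the execution timeline: initially \textit{acceptVal}$_i = x_i$; the message-handler's accept branch copies another process's \textit{acceptVal}, which by induction is such a join; and the for-loop's else branch takes the join of \textit{acceptVal} with values contained in \textit{reject ACKs}, each of which is again such a join. Since any learned value is one of these \textit{acceptVal} values, it is a join of a subset of $\{x_1,\dots,x_n\}$, hence $\leq \sqcup\{x_1,\dots,x_n\}$.

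Comparability is the heart of the argument and the step I expect to be the main obstacle, since it must work across processes that may decide in different round-trips. The key idea is a quorum-intersection argument: if $p_i$ decides on $y_i$, it received \textit{accept} ACKs from a majority (more than $n/2$) of processes for the proposal $y_i$; similarly $p_j$ received \textit{accept} ACKs from a majority for $y_j$. Two majorities of an $n$-process system intersect, so there is some process $p_k$ that sent an \textit{accept} ACK for $y_i$ and an \textit{accept} ACK for $y_j$. I would then need the lemma that the sequence of values a single process ever \emph{accepts} (i.e., assigns to \textit{acceptVal} via the handler's accept branch, which requires $v_j \geq$ current \textit{acceptVal}) is totally ordered — this is immediate from the monotonicity already established in the Downward-Validity step, since each accepted value is $\geq$ all previously accepted ones. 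Hence $y_i$ and $y_j$, both being values accepted by $p_k$ at some point, are comparable. A subtlety to handle carefully is that a process may send an \textit{accept} ACK for a proposal without that proposal value becoming its \textit{acceptVal} if it simultaneously sees a larger one — but re-inspecting the handler, on the accept branch it does set \textit{acceptVal} $:= v_j$, so each accepted value genuinely passes through \textit{acceptVal}, and the argument goes through.

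Finally, the round-trip bound $\min\{h(L), f+1\}$ is exactly the statement of Lemma~\ref{lem:decide_delta}, which I would simply invoke; combined with the fact that each round-trip is one send-and-collect phase, this gives the claimed time complexity and completes the proof.
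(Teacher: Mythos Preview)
Your proposal is correct and follows essentially the same approach as the paper: monotonicity of \textit{acceptVal} for Downward-Validity, an induction showing every value is a join of inputs for Upward-Validity, and the quorum-intersection argument (two majorities share a process, whose accepted values form a chain) for Comparability, with the round-trip bound read off from Lemma~\ref{lem:decide_delta}. Your treatment is in fact more careful than the paper's, particularly the explicit check that every accepted value actually passes through \textit{acceptVal}, but the underlying argument is identical.
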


{\bf Complexity}. From Lemma \ref{lem:decide_delta}, we know that $LA_{\delta}$ takes at most $\min \{h(L), f + 1\}$ round-trips, which results in $2 \cdot \min \{h(L), f + 1\}$ message delays, since one round-trip takes two message delays. At each round-trip, each process sends out at most $2n$ messages. Thus, the number of messages for all processes is at most $2 \cdot n^2 \cdot \min \{h(L), f + 1\}$. 

\section{Generalized Lattice Agreement}
In this section, we discuss the generalized lattice agreement problem as defined in Section 2.3. Since it is easy to adapt algorithms for lattice agreement in synchronous systems to solve generalized lattice agreement problem, we only consider asynchronous systems. We show how to adapt $LA_{\delta}$ to solve the generalized lattice agreement problem (algorithm $GLA_{\alpha}$ shown in Figure \ref{fig:gla_alpha}) in $\min \{O(h(L)), O(f)\}$ units of time. 

\subsection{Algorithm $GLA_{\alpha}$}
$GLA_{\alpha}$ invokes the Agree() procedure to learn a new value multiple times. The Agree() procedure is an execution of $LA_{\delta}$ with some modifications (to be given later). A sequence number is associated with each execution of the Agree() procedure, thus each correct process has a learned value for each sequence number. The basic idea of $GLA_{\alpha}$ is to let all processes sequentially execute $LA_{\delta}$ to learn values, and make sure: 1) any two learned values for the same sequence number are comparable, 2) any learned value for a bigger sequence number is at least as big as any learned value for a smaller sequence number. The first goal can be simply achieved by invoking $LA_{\delta}$ with the sequence number. In order to achieve the second goal, the key idea is to make any proposal for sequence number $s + 1$ to be at least as big as the largest learned value for sequence number $s$. Notice that at each round-trip of $LA_{\delta}$ execution, a process waits for $n - f$ \textit{ACKs}, and any two set of $n - f$ processes have at least one process in common. Thus, the second goal can be achieved by making sure at least $n - f$ processes know the largest learned value after execution of $LA_{\delta}$ for a sequence number.

Upon receiving a value $v$ from client in a message tagged with \textit{ClientValue}, a process adds $v$ into its buffer and sends a \textit{ServerValue} message with $v$ to all other processes. The process can start to learn new values only when it succeeds at its current proposal. Otherwise, $LA_{\alpha}$ may not terminate, as shown by an example in \cite{faleiro2012generalized}. Upon receiving a \textit{ServerValue} message with value $v$, a process simply adds $v$ to its buffer.
The Agree() procedure is automatically executed when the guard condition is satisfied; that is, it is not currently proposing a value and it has some value in its buffer or it has seen a sequence number bigger than its current sequence number.  
Inside the Agree() procedure, a process first updates its {\em acceptVal} to be the join of current {\em acceptVal} and {\em buffVal}. Then, it starts an adapted $LA_{\delta}$ execution. The original $LA_{\delta}$ and adapted $LA_{\delta}$ differ in the following ways: 1) Each message in the adapted $LA_{\delta}$ is associated with a sequence number. 2) A process can also decide on a value for a sequence number if it receives any \textit{decide} \textit{ACK} message for that sequence number. 3) On receiving a \textit{prop} message associated with a sequence number $s'$, if $s'$ is smaller than its current sequence number which means it has learned a value for $s'$, then it simply sends \textit{ACK} message with its learned value for $s'$ back. If $s'$ is greater than its current sequence number, it updates its \textit{maxSeq} and waits until its current sequence number matches $s'$, then it executes the same procedure as the original $LA_{\delta}$, i.e, it sends back \textit{ACK} message with \textit{accept} or \textit{reject} based on whether the proposal value is bigger than its current accept value or not. The reason why a process keeps track of the maximal sequence number it has ever seen is to make sure each process has a learned value for each sequence number. When the maximum sequence number is 
bigger than
its current sequence number, it has to invoke Agree() procedure even if it does not have any new value to propose. After execution of adapted $LA_{\delta}$, a process increases its current sequence number.
\begin{figure}[htb]\begin{centering}
\fbox{\begin{minipage}  {2.7in}
\underline{$\bm{GLA_{\alpha}}$ for $p_i$}\\
\h \textit{s} := 0 // sequence number\\
\h \textit{maxSeq} := -1 // max seq number seen\\
\h \textit{buffVal} := $\bot$ // received values\\
\h /* map from seq to learned value */\\
\h \textit{LV} := $\bot$ \\
\h \textit{acceptVal} := $\bot$ \\
\h \textit{active} := \textit{false} \\ 

\textbf{on receiving} \textit{ClientValue}($v$):\\
\h \textit{buffVal} := \textit{buffVal} $ \sqcup ~v$\\
\h Send \textit{ServerValue}($v$) to all\\
\\
\textbf{on receiving} \textit{ServerValue}($v$):\\
\h \textit{buffVal} := \textit{buffVal} $ \sqcup ~v$\\
\\
{\bf on receiving} \textit{prop}$(v_j, r, s')$ from $p_j$:\\
\h {\bf if} {$s' < s$}\\
\h\h Send \textit{ACK}(\textit{\quotes{decide}, LV}[$s'$], $r, s'$)\\
\h\h {\bf break}\\
\h {\bf if}  {$s' > s$}\\
\h \h \textit{maxSeq} := $\max \{s', \textit{maxSeq}\}$\\
\h\h {\bf wait until} $s = s'$\\
\h {\bf if} {$v_j \geq$ \textit{acceptVal}}\\
\h\h Send \textit{ACK}(\textit{\quotes{accept},} $-, r, s'$) \\
\h\h \textit{acceptVal} $:= v_j$ \\
\h {\bf else}  \\
\h\h Send \textit{ACK}(\textit{\quotes{reject}}, \textit{acceptVal}, $r, s'$)\\
\end{minipage}

\hspace{0.1in}
\begin{minipage}  {2.5in}
  \textbf{Procedure Agree():}\\
  {\bf guard:} (\textit{active} = $false$) \\
  \hspace*{0.4in} $\wedge$ (\textit{buffVal} $\neq \bot ~ \vee$ 
  \textit{maxSeq} $\geq s$) \\
  {\bf effect:} \\
\h \textit{active} := $true$\\
\h \textit{acceptVal} := \textit{buffVal} $\sqcup$ \textit{acceptVal} \\
\h \textit{buffVal} := $\bot$\\
\h\\
\h /* $LA_{\delta}$ with sequence number */\\
\h {\bf for $r := 1$ to $f + 1$} \\
\h\h \textit{val} := \textit{acceptVal}\\
\h\h  Send \textit{prop}(\textit{val}, $r, s$) to all\\
\h\h  {\bf wait for} $n - f$ \textit{ACK}($-, -, r, s$)\\
\h\h let $V$ be values in \textit{reject ACKs} \\
\h\h let $D$ be values in \textit{decide ACKs} \\
\h\h let \textit{tally} be number of \textit{accept ACKs} \\
\h\h {\bf if} {$|D| > 0$}\\
\h\h\h $val := \sqcup \{d | d \in D\}$ \\
\h\h\h {\bf break}\\
\h\h {\bf else if} {\textit{tally} $> \frac{n}{2}$} \\
\h\h\h {\bf break}\\
\h\h {\bf else}\\
\h\h\h \textit{acceptVal} := \textit{acceptVal} \\
\hspace*{1.2in} $\sqcup \{v ~| ~v \in V\}$ \\
\h {\bf end for}\\
\h \textit{LV}$[s]$ := \textit{val} \\
\h \textit{s} := \textit{s} + 1\\
\h \textit{active} := \textit{false}\\
\end{minipage}
} 
\end{centering}
\caption{Algorithm $GLA_{\alpha}$ \label{fig:gla_alpha}}
\vspace*{-0.2in}
\end{figure}

We next show the correctness of $GLA_{\alpha}$. Let $acceptVal_s^{p}$ denote the \textit{acceptVal} of process $p$ at the end of Agree() procedure for sequence number $s$. Let $LV_p$ denote the map of sequence number to learned value (variable $LV$) for process $p$ and $m_s = \sqcup \{LV_p[s]: p \in [1..n]\} $, i.e, $m_s$ denotes the join of all learned values for sequence number $s$. Let $LP_s = \{p~ |~ (p \in [1..n]) \wedge (m_s \leq acceptVal_s^{p}) \}$, i.e, $LP_s$ is the set of processes which have \textit{acceptVal} greater than the join of all learned values for the sequence number $s$. Notice that a process has two ways to learn a value for its current sequence number in the Agree() procedure: 1) by receiving a majority of \textit{accept} \textit{ACKs}. 2) by receiving some \textit{decide} \textit{ACKs}. 

The following lemma proves that the adapted $LA_{\alpha}$ satisfies the first goal. 
\begin{lemma}\label{lem:comp_same_sequence}
For any sequence number $s$, \textit{$LV_p[s]$} is comparable with \textit{$LV_q[s]$} for any two processes $p$ and $q$.
\begin{proof}
We only need to show that any two processes which learn by the first way must learn comparable values, since processes which learn by the second way simply learn values from processes which learn by the first way. By the same reasoning as \textit{Comparability} of Theorem \ref{theo:comp_delta}, we know this is true. 
\end{proof}
\end{lemma}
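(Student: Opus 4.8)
The plan is to show that a single-sequence-number slice of the adapted $LA_{\delta}$ execution inherits the Comparability argument of Theorem~\ref{theo:comp_delta}, with the \textit{decide ACK} shortcut absorbed by a well-founded induction. Fix $s$ and split the processes that write a learned value for $s$ into those that learn \emph{by the first way} (they leave the loop because the number of \textit{accept ACK}s exceeds $n/2$) and those that learn \emph{by the second way} (they leave because $|D|>0$ and set $val := \sqcup\{d : d \in D\}$). Let $C_s$ be the set of values learned by the first way for sequence number $s$.

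\emph{First way.} If $p$ learns $y_p = LV_p[s]$ by the first way, more than $n/2$ processes returned an \textit{accept ACK} tagged with $s$ in response to the proposal $(y_p,\cdot,s)$, and by the code each such process set its \textit{acceptVal} to $y_p$ at an instant when its own sequence number was $s$. For two first-way learners $p$ and $q$ the two majorities share a process $w$, so $w$ set \textit{acceptVal} to both $y_p$ and $y_q$. Since every process's sequence counter is monotone, $w$ is at sequence $s$ during a single contiguous time interval, both of these accept-updates fall inside it, and every update to \textit{acceptVal} (an \textit{accept}, or a join with reject values) is non-decreasing; hence the \textit{acceptVal} values $w$ holds during that interval form a chain and $y_p, y_q$ are comparable. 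So $C_s$ is a chain --- this is the same reasoning as \emph{Comparability} in Theorem~\ref{theo:comp_delta}.

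\emph{Second way.} A \textit{decide ACK} carrying $LV_r[s]$ is emitted only by a process $r$ whose sequence number has already moved past $s$, hence which has already written $LV_r[s]$ and finished \textit{Agree()} for $s$; so any second-way learning event for $s$ strictly follows, in real time, the events whose values it copies. Order all learning events for $s$ by time and prove by strong induction that every $LV_p[s]$ is the join of a nonempty subset of $C_s$: the earliest one cannot be of the second way (no \textit{decide ACK} for $s$ exists yet), so it lies in $C_s$; and a later second-way event yields $\sqcup\{d : d \in D\}$ where each $d$ is an earlier $LV_r[s]$, so by the inductive hypothesis a join of elements of $C_s$, making $\sqcup\{d : d \in D\}$ one too. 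Finally, for any $p, q$, $LV_p[s]$ is the join of a finite subset of the chain $C_s$, hence equals the maximum of that subset, which lies in $C_s$ and is therefore comparable with every element of $C_s$; Lemma~\ref{lem:lat} with $x = LV_p[s]$, $V = C_s$, and $U$ the subset of $C_s$ whose join is $LV_q[s]$ then gives $LV_p[s] \leq LV_q[s]$ or $LV_q[s] \leq LV_p[s]$.

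The main obstacle is the timing bookkeeping rather than the lattice algebra: one must check that the common process $w$ executes both of its \textit{accept}s for $s$ while it itself is at sequence $s$ --- this is precisely what the \textbf{wait until} $s = s'$ branch plus monotonicity of the sequence counter guarantee --- and one must justify the base case, namely that no \textit{decide ACK} for $s$ can exist before some process has completed \textit{Agree()} for $s$. A degenerate case also warrants a remark: were the loop to run to $r = f+1$ without ever meeting \textit{tally} $> n/2$ or $|D| > 0$, the learned value would be an \textit{acceptVal} not certified by a majority; I would rule this out via the analogue of Lemma~\ref{lem:decide_delta} for the adapted algorithm (once enough processes reach sequence $s$, the height of the relevant sublattice drops to $1$ within $\min\{h(L), f+1\}$ round-trips, forcing a majority of \textit{accept}s), so only the two cases above arise.
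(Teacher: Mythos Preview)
Your argument is correct and follows the same two-step skeleton as the paper: show that first-way learners obtain pairwise comparable values by the majority-intersection argument of Theorem~\ref{theo:comp_delta}, and then reduce second-way learners to first-way ones. You supply considerably more detail than the paper does---the paper asserts in one line that second-way learners ``simply learn values from processes which learn by the first way'' and invokes Theorem~\ref{theo:comp_delta}, whereas you spell out the time-ordered induction that makes this reduction well-founded, the monotonicity of \textit{acceptVal} at the common acceptor, and the observation that every learned value in fact lies in the chain $C_s$. Your closing remark about the degenerate loop exit (no majority and no \textit{decide ACK} after $f+1$ rounds) is a legitimate gap that the paper also leaves implicit; it is indeed handled by the analogue of Lemma~\ref{lem:decide_delta}, and you are right to flag it.
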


From Lemma \ref{lem:comp_same_sequence}, we know that $m_s$ is the largest learned value for sequence number $s$. 
\begin{lemma}\label{lem:maj_largest_alpha}
For any sequence number $s$, $|LP_s| > \frac{n}{2}$.
\begin{proof}
Consider Agree() procedure for $s$. Since $m_s$ is the largest learned value for sequence number $s$, there must exist a process $p$ which learns $m_s$ by the first way. Thus, $p$ must have received a majority of \textit{accept} \textit{ACKs}, which means at least a majority of processes have \textit{acceptVal} greater than $m_s$ after Agree() procedure for $s$. Therefore, $|LP_s| > \frac{n}{2}$. 
\end{proof}
\end{lemma}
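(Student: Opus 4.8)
# Proof Proposal for Lemma~\ref{lem:maj_largest_alpha}

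The plan is to trace how the largest learned value $m_s$ for sequence number $s$ comes into existence and then argue that its appearance forces a majority of processes to record an \textit{acceptVal} at least as large. First I would invoke Lemma~\ref{lem:comp_same_sequence}, which guarantees that all learned values for sequence number $s$ form a chain; hence $m_s = \sqcup\{LV_p[s] : p \in [1..n]\}$ is actually \emph{achieved} by some process, not merely a hypothetical join. Moreover, since processes that learn ``by the second way'' (receiving a \textit{decide ACK}) simply copy a value already learned by someone who learned ``by the first way'' (a majority of \textit{accept ACKs}), there must be a process $p^\star$ that learns exactly $m_s$ via the first way during the Agree() procedure for $s$.

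The core step is the following observation about the round-trip in which $p^\star$ succeeds. When $p^\star$ decides on $m_s$ by the first way, it is because in that round-trip it set $\textit{val} := \textit{acceptVal} = m_s$, broadcast $\textit{prop}(m_s, r, s)$, and received more than $n/2$ \textit{accept ACKs}. By inspecting the ``on receiving \textit{prop}'' handler of $GLA_\alpha$: whenever a process $q$ replies ``accept'' to a proposal with value $v_j$, it immediately executes $\textit{acceptVal} := v_j$. So every one of the strictly-more-than-$n/2$ processes that sent an \textit{accept ACK} to $p^\star$ set its \textit{acceptVal} to $m_s$ at that moment. Since \textit{acceptVal} is non-decreasing along the whole execution for every process (the only updates are joins, inside both the \textit{prop} handler and the Agree() body), each such process $q$ satisfies $m_s \leq acceptVal_s^{q}$ at the end of the Agree() procedure for $s$. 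Therefore all these processes lie in $LP_s$, giving $|LP_s| > \frac{n}{2}$.

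The main obstacle I anticipate is the bookkeeping around \emph{when} these \textit{accept ACKs} are sent relative to the end of the Agree() procedure for $s$: a process $q$ might send its \textit{accept ACK} and then, before finishing sequence $s$, process other \textit{prop} messages for $s$ (from other proposers running concurrently) or advance in ways that touch \textit{acceptVal}. The resolution is precisely the monotonicity argument above — every update to \textit{acceptVal} in the code is a join with some value, so once $acceptVal \geq m_s$ holds it holds forever, in particular at the instant captured by $acceptVal_s^{q}$. A secondary subtlety is confirming that a process which learns $m_s$ via a \textit{decide ACK} does not undermine the count: such a process learned $m_s$ because some process broadcast $\textit{ACK}(\text{``decide''}, LV[s'], \dots)$, and that broadcaster had itself already learned $LV[s'] = m_s$ (by comparability, any decide value is $\le m_s$, and for the top value it equals $m_s$); chasing this back a finite number of times reaches a first-way learner, which is the $p^\star$ we used. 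Hence the majority bound on $LP_s$ follows.
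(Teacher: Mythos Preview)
Your proposal is correct and follows essentially the same approach as the paper's proof: identify a process that learns $m_s$ via a majority of \textit{accept ACKs}, observe that each accepting process sets its \textit{acceptVal} to (at least) $m_s$, and conclude that more than $n/2$ processes belong to $LP_s$. You supply more justification than the paper does---explicitly invoking comparability to ensure $m_s$ is attained, chasing \textit{decide ACKs} back to a first-way learner, and spelling out the monotonicity of \textit{acceptVal}---but the underlying argument is the same.
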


The lemma below shows that $GLA_{\alpha}$ achieves the second goal. 
\begin{lemma}\label{lem:stability_comp}
$m_s \leq LV_p[s + 1]$ for any process $p$ and any sequence number $s$.
\begin{proof}
From Lemma \ref{lem:maj_largest_alpha}, we know for sequence number $s$ at least a majority of processes have \textit{acceptVal} greater than $m_s$. To decide on $LV_p[s + 1]$, process $p$ must get majority \textit{accept}. Since any two majority has at least one process in common, $m_s \leq LV_p[s + 1]$.
\end{proof}
\end{lemma}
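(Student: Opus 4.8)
The plan is to reduce the statement to a simple quorum-intersection argument, leaning entirely on the two lemmas just proved. First I would recall why $m_s$ is well-defined as the largest learned value for sequence number $s$: Lemma~\ref{lem:comp_same_sequence} shows all learned values for $s$ are pairwise comparable, so their join $m_s = \sqcup\{LV_p[s] : p \in [1..n]\}$ is actually attained by some process, i.e.\ $m_s$ is itself one of the learned values for $s$. Then Lemma~\ref{lem:maj_largest_alpha} tells us $|LP_s| > \frac{n}{2}$, i.e.\ strictly more than half the processes finish the Agree() procedure for $s$ with $acceptVal \geq m_s$.

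Next I would fix an arbitrary process $p$ and sequence number $s$ and examine how $p$ obtains $LV_p[s+1]$ inside its Agree() procedure for $s+1$. There are two cases corresponding to the two ways of learning. In the first case $p$ learns via a majority of \textit{accept} \textit{ACKs}: at some round-trip $r$ of the adapted $LA_\delta$ execution for $s+1$, more than $\frac{n}{2}$ processes reply \textit{accept} to $p$'s proposal $val = LV_p[s+1]$, meaning each such process sets (or already has) $acceptVal \geq val$ only if $val \geq$ its current $acceptVal$; crucially, a process only sends \textit{accept} when the proposed value dominates its own $acceptVal$. Since $|LP_s| > \frac{n}{2}$ and the accepting set also has size $> \frac{n}{2}$, the two majorities intersect in some process $q \in LP_s$. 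That process $q$ had $acceptVal_s^q \geq m_s$ at the end of Agree() for $s$, and since $acceptVal$ is non-decreasing, it still had $acceptVal \geq m_s$ when it replied \textit{accept} to $p$'s proposal; therefore $m_s \leq acceptVal \leq val = LV_p[s+1]$. In the second case $p$ learns via a \textit{decide} \textit{ACK}, so $LV_p[s+1]$ is the join of learned values $LV_q[s+1]$ reported by other processes who learned by the first way; by comparability (Lemma~\ref{lem:comp_same_sequence} applied at $s+1$) and the first case applied to those processes, each such value is $\geq m_s$, hence so is their join.

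The main obstacle, and the step I would be most careful about, is pinning down exactly which intermediate value satisfies the domination inequality: a process $q$ that sends \textit{accept} does so against the \emph{proposal} value $val$, which at the successful round-trip equals $LV_p[s+1]$, but $q$'s own $acceptVal$ at that moment must be checked to still dominate $m_s$ — this is where the monotonicity of $acceptVal$ across the boundary between the Agree() call for $s$ and the Agree() call for $s+1$ does the work. One must also make sure the quorum intersection is used at the right round-trip, namely the one in which $p$ actually commits its learned value, not an earlier failed round-trip. Once those bookkeeping details are aligned, the argument is a one-line intersection claim, so I would keep the proof short and defer any further routine verification to the full version.

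\begin{proof}
From Lemma~\ref{lem:comp_same_sequence}, all learned values for sequence number $s$ are pairwise comparable, so $m_s$ is itself a learned value for $s$ and is the largest one. Fix a process $p$ and a sequence number $s$. Process $p$ obtains $LV_p[s+1]$ in one of two ways during Agree() for $s+1$.

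Suppose $p$ learns $LV_p[s+1]$ by receiving a majority of \textit{accept} \textit{ACKs} at some round-trip of its adapted $LA_\delta$ execution for $s+1$; let $A$ be the set of processes sending these \textit{accept} replies, so $|A| > \frac{n}{2}$, and each $q \in A$ had $acceptVal \geq LV_p[s+1]$ is false in general, but each $q \in A$ replied \textit{accept} only because the proposed value $LV_p[s+1]$ dominated its current $acceptVal$. By Lemma~\ref{lem:maj_largest_alpha}, $|LP_s| > \frac{n}{2}$, so $A \cap LP_s \neq \emptyset$; pick $q \in A \cap LP_s$. Since $q \in LP_s$, we have $acceptVal_s^{q} \geq m_s$, and because $acceptVal$ is non-decreasing at every process, $q$'s $acceptVal$ was still at least $m_s$ when it processed $p$'s proposal. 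As $q$ replied \textit{accept}, the proposed value $LV_p[s+1]$ dominates this $acceptVal$, hence $m_s \leq LV_p[s+1]$.

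Otherwise $p$ learns $LV_p[s+1]$ from a \textit{decide} \textit{ACK}, so $LV_p[s+1] = \sqcup\{LV_q[s+1] : q \in D\}$ for some set $D$ of processes that each learned $LV_q[s+1]$ by the first way. By the previous paragraph applied to each such $q$, $m_s \leq LV_q[s+1]$, and therefore $m_s \leq \sqcup\{LV_q[s+1] : q \in D\} = LV_p[s+1]$.
\end{proof}
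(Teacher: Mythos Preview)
Your approach is exactly the paper's: quorum intersection between the set $LP_s$ from Lemma~\ref{lem:maj_largest_alpha} and the set of processes that \textit{accept} $p$'s winning proposal for $s+1$, using monotonicity of \textit{acceptVal}. In fact you are more careful than the paper, which asserts ``to decide on $LV_p[s+1]$, process $p$ must get majority \textit{accept}'' without addressing the possibility that $p$ learns via a \textit{decide} \textit{ACK}; your case split covers this.

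Two small points. In Case~1 the sentence ``each $q \in A$ had $acceptVal \geq LV_p[s+1]$ is false in general, but\ldots'' reads as a mid-thought self-correction; just delete the first clause. In Case~2 you assert that every $q \in D$ learned $LV_q[s+1]$ ``by the first way,'' but a process that sends a \textit{decide} \textit{ACK} may itself have learned via an earlier \textit{decide} \textit{ACK}. The fix is straightforward: argue by induction on the (well-founded) order in which processes learn a value for $s+1$; the first learner necessarily used a majority of \textit{accept}s, so Case~1 applies to it, and every later learner's value is a join of earlier ones. With that adjustment the proof is complete.
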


\begin{theorem}\label{theo:corretness_gla_alpha}
Algorithm $GLA_{\alpha}$ solves generalized lattice agreement when a majority of processes is correct.
\begin{proof}
\textit{Validity} holds since any learned value is the join of a subset of values received.   

\textit{Stability}. 
From Lemma \ref{lem:stability_comp} and the fact that $LV_p[s] \leq m_s$, we have that $LV_p[s] \leq LV_p[s + 1]$ for any process $p$ and any sequence number $s$, which implies {\em Stability}.

\textit{Comparability}. We need to show that \textit{$LV_p[s]$} and \textit{$LV_q[s']$} are comparable for any two processes $p$ and $q$, and for any two sequence number $s$ and $s'$. If $s = s'$, this is immediate from Lemma \ref{lem:comp_same_sequence}. Now consider the case when $s \neq s'$. Without loss of generality, assume $s < s'$. From Lemma \ref{lem:stability_comp}, we can conclude that \textit{$LV_p[s]$} $ \leq $ \textit{$LV_q[s']$}. Thus, {\em comparability} holds.

\textit{Liveness}. Any received value $v$ is eventually included in some proposal, i.e, \textit{prop} message. From Theorem \ref{theo:comp_delta}, we know that in at most $2 \cdot \min \{h(L), f + 1\}$ message delays that proposal value will be included in some learned value. Thus, $v$ will be learned eventually.
\end{proof}
\end{theorem}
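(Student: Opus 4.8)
The plan is to establish the four properties of generalized lattice agreement one at a time, reusing the three lemmas already proved for the adapted $LA_{\delta}$ (Lemmas~\ref{lem:comp_same_sequence}, \ref{lem:maj_largest_alpha}, \ref{lem:stability_comp}) together with the termination and comparability statement for $LA_{\delta}$ itself (Theorem~\ref{theo:comp_delta}). \emph{Validity} should be essentially immediate by inspection of the code: \textit{acceptVal} is only ever raised to a join involving \textit{buffVal} and values carried in \textit{reject}/\textit{decide} \textit{ACK}s, every value entering \textit{buffVal} traces back to a \textit{ClientValue} or \textit{ServerValue} message and hence to received inputs, and \textit{LV}$[s]$ is either such an \textit{acceptVal} or a join of \textit{decide} values which are themselves earlier learned values; a structural induction over sequence numbers closes it.

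For \emph{Stability} I would argue the chain $LV_p[s] \le m_s \le LV_p[s+1]$. The first inequality holds because, by Lemma~\ref{lem:comp_same_sequence}, all learned values for a fixed $s$ are pairwise comparable, so $m_s$, being their join, is actually their maximum and dominates $LV_p[s]$; the second inequality is exactly Lemma~\ref{lem:stability_comp}. Iterating gives $LV_p[s]\le LV_p[s']$ for $s\le s'$, which is \emph{Stability}. \emph{Comparability} then splits into two cases: when the two learned values share a sequence number, Lemma~\ref{lem:comp_same_sequence} applies directly; when they have distinct sequence numbers $s<s'$, combine $LV_p[s]\le m_s$ (maximality again) with $m_s\le LV_q[s+1]\le\cdots\le LV_q[s']$ (one application of Lemma~\ref{lem:stability_comp} followed by repeated use of the stability chain) to obtain $LV_p[s]\le LV_q[s']$.

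\emph{Liveness} is the step I expect to be the real work. Fix a value $v$ received by a correct process $p$. The argument has three stages: first, $v$ (as part of some join) reaches the \textit{buffVal} of every correct process, since $p$ relays it via a \textit{ServerValue} broadcast and channels are reliable with a correct majority; second, every correct process invokes and \emph{completes} the Agree() procedure infinitely often, so in particular runs one Agree() while $v$ already sits in its buffer, at which point the first line of the effect forces \textit{acceptVal}$\ge v$ and the subsequent proposal carries $v$; third, by Theorem~\ref{theo:comp_delta} that Agree() terminates within $\min\{h(L),f+1\}$ round-trips with a learned value $\ge v$, and Stability plus the \textit{decide}-\textit{ACK} mechanism then drive $v$ below the learned values of all correct processes. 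The delicate point — and the main obstacle — is the second stage: an Agree() call can block at \texttt{wait until $s=s'$}, so one must rule out a cyclic wait among processes stranded at different sequence numbers. I would do this by induction on the sequence number, exploiting the asymmetry in the \textit{prop} handler: a process receiving a \textit{prop} for a \emph{smaller} sequence answers instantly with a \textit{decide} \textit{ACK}, so no correct process is ever held up by a laggard, whereas a process behind a \emph{larger} sequence is compelled by \textit{maxSeq} to keep calling Agree() until it catches up, each such call terminating because a correct majority always supplies the required $n-f$ \textit{ACK}s. Since sequence numbers are bounded below, this dependency is well-founded, which is precisely what makes the "infinitely many completed rounds" claim go through.
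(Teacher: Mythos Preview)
Your proposal is correct and, for \emph{Validity}, \emph{Stability}, and \emph{Comparability}, follows exactly the paper's argument: validity by inspection of how \textit{acceptVal} and \textit{LV} are updated, stability via the chain $LV_p[s]\le m_s\le LV_p[s+1]$ from Lemma~\ref{lem:stability_comp}, and comparability by the same case split on $s=s'$ versus $s<s'$ using Lemma~\ref{lem:comp_same_sequence} and Lemma~\ref{lem:stability_comp}.

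Where you diverge is \emph{Liveness}. The paper's proof is two sentences: a received value is eventually in some \textit{prop} message, and Theorem~\ref{theo:comp_delta} then bounds the delay until it is learned. You instead unfold this into three stages and, in particular, isolate and address a deadlock concern the paper does not mention---that a process inside the \textit{prop} handler may block at \texttt{wait until $s=s'$}, so one must argue that no cyclic wait arises among correct processes at different sequence numbers. Your well-foundedness argument (a process at a larger sequence answers immediately with a \textit{decide} \textit{ACK}; a process at a smaller sequence is driven forward by \textit{maxSeq}; and each Agree() at a given sequence terminates because a correct majority supplies $n-f$ \textit{ACK}s) is the right shape and is genuinely needed to make the appeal to Theorem~\ref{theo:comp_delta} legitimate. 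In short: same high-level decomposition, but your liveness treatment is strictly more careful than the paper's, and the extra care is warranted rather than superfluous.
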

{\bf Complexity.} For time complexity, from the analysis for \textit{liveness} in Theorem \ref{theo:corretness_gla_alpha}, we know that a received value is learned in at most $2 \cdot \min \{h(L), f + 1\}$ message delays. For message complexity, since each process sends out $n$ messages per \textit{round-trip}, the total number of messages needed to learn a value is $2 \cdot n^2 \cdot \min \{h(L), f + 1\}$.

\section{Conclusions}
We have presented algorithms for the lattice agreement problem and the generalized lattice agreement problem. These algorithms achieve significantly better time complexity than previous algorithms. For future work, we would like to know the answers to the following two questions: 1) Is $\log f$ rounds the lower bound for lattice agreement in synchronous message passing systems? 2) Is $O(f)$ message delays optimal for the lattice agreement and generalized lattice agreement problem in asynchronous message passing systems?
\subparagraph*{Acknowledgements.}

We want to thank John Kaippallimalil for providing some useful application cases for CRDT and generalized lattice agreement.

\appendix


\bibliography{ref}

\begin{thebibliography}{10}

\bibitem{afek1993atomic}
Yehuda Afek, Hagit Attiya, Danny Dolev, Eli Gafni, Michael Merritt, and Nir
  Shavit.
\newblock Atomic snapshots of shared memory.
\newblock {\em Journal of the ACM (JACM)}, 40(4):873--890, 1993.

\bibitem{attiya1995atomic}
Hagit Attiya, Maurice Herlihy, and Ophir Rachman.
\newblock Atomic snapshots using lattice agreement.
\newblock {\em Distributed Computing}, 8(3):121--132, 1995.

\bibitem{attiya1998atomic}
Hagit Attiya and Ophir Rachman.
\newblock Atomic snapshots in $\mathcal{O} (n log n)$ operations.
\newblock {\em SIAM Journal on Computing}, 27(2):319--340, 1998.

\bibitem{attiya2004distributed}
Hagit Attiya and Jennifer Welch.
\newblock {\em Distributed computing: fundamentals, simulations, and advanced
  topics}, volume~19.
\newblock John Wiley \& Sons, 2004.

\bibitem{delporte2016implementing}
Carole Delporte-Gallet, Hugues Fauconnier, Sergio Rajsbaum, and Michel Raynal.
\newblock Implementing snapshot objects on top of crash-prone asynchronous
  message-passing systems.
\newblock In {\em International Conference on Algorithms and Architectures for
  Parallel Processing}, pages 341--355. Springer, 2016.

\bibitem{dolev1983authenticated}
Danny Dolev and H~Raymond Strong.
\newblock Authenticated algorithms for byzantine agreement.
\newblock {\em SIAM Journal on Computing}, 12(4):656--666, 1983.

\bibitem{faleiro2012generalized}
Jose~M Faleiro, Sriram Rajamani, Kaushik Rajan, G~Ramalingam, and Kapil
  Vaswani.
\newblock Generalized lattice agreement.
\newblock In {\em Proceedings of the 2012 ACM symposium on Principles of
  distributed computing}, pages 125--134. ACM, 2012.

\bibitem{fischer1985impossibility}
Michael~J Fischer, Nancy~A Lynch, and Michael~S Paterson.
\newblock Impossibility of distributed consensus with one faulty process.
\newblock {\em Journal of the ACM (JACM)}, 32(2):374--382, 1985.

\bibitem{herlihy1990linearizability}
Maurice~P Herlihy and Jeannette~M Wing.
\newblock Linearizability: A correctness condition for concurrent objects.
\newblock {\em ACM Transactions on Programming Languages and Systems (TOPLAS)},
  12(3):463--492, 1990.

\bibitem{lamport1998part}
Leslie Lamport.
\newblock The part-time parliament.
\newblock {\em ACM Transactions on Computer Systems (TOCS)}, 16(2):133--169,
  1998.

\bibitem{lamport2001paxos}
Leslie Lamport et~al.
\newblock Paxos made simple.
\newblock {\em ACM Sigact News}, 32(4):18--25, 2001.

\bibitem{mavronicolasabound}
Marios Mavronicolasa.
\newblock A bound on the rounds to reach lattice agreement.
\newblock 2018.

\bibitem{raynal2012concurrent}
Michel Raynal.
\newblock {\em Concurrent programming: algorithms, principles, and
  foundations}.
\newblock Springer Science \& Business Media, 2012.

\bibitem{schneider1990implementing}
Fred~B Schneider.
\newblock Implementing fault-tolerant services using the state machine
  approach: A tutorial.
\newblock {\em ACM Computing Surveys (CSUR)}, 22(4):299--319, 1990.

\bibitem{shapiro2011conflict}
Marc Shapiro, Nuno Pregui{\c{c}}a, Carlos Baquero, and Marek Zawirski.
\newblock Conflict-free replicated data types.
\newblock In {\em Symposium on Self-Stabilizing Systems}, pages 386--400.
  Springer, 2011.

\bibitem{shapiro2011convergent}
Marc Shapiro, Nuno Pregui{\c{c}}a, Carlos Baquero, and Marek Zawirski.
\newblock Convergent and commutative replicated data types.
\newblock {\em Bulletin-European Association for Theoretical Computer Science},
  (104):67--88, 2011.

\bibitem{tanenbaum2007distributed}
Andrew~S Tanenbaum and Maarten Van~Steen.
\newblock {\em Distributed systems: principles and paradigms}.
\newblock Prentice-Hall, 2007.

\bibitem{taubenfeld2006synchronization}
Gadi Taubenfeld.
\newblock {\em Synchronization algorithms and concurrent programming}.
\newblock Pearson Education, 2006.

\end{thebibliography}

\end{document}